\documentclass[11pt,a4paper]{article}

\usepackage[margin=1in]{geometry}
\usepackage{amsmath,amssymb,amsthm}
\usepackage{hyperref}
\usepackage{cleveref}
\usepackage{comment}
\usepackage{enumitem}
\usepackage{braket}
\usepackage{graphicx}
\usepackage{booktabs}
\usepackage{xcolor}

\usepackage{mathtools}

\newtheorem{lemma}{Lemma}[section]
\newtheorem{proposition}[lemma]{Proposition}
\newtheorem{remark}[lemma]{Remark}

\newtheorem{theorem}{Theorem}[section]

\newtheorem{corollary}[theorem]{Corollary}

\theoremstyle{definition}
\newtheorem{definition}[theorem]{Definition}

\theoremstyle{remark}

\DeclareMathOperator{\com}{com}
\DeclareMathOperator{\Var}{Var}
\newcommand{\HC}{H_C}
\newcommand{\HM}{H_M}
\newcommand{\Cavg}{C_{\text{avg}}}
\newcommand{\Cmax}{C_{\text{max}}}
\newcommand{\HMk}[1]{H_M^{(#1)}}

\newcommand{\Eaug}[1]{E_M^{(#1)}}
\newcommand{\dist}{\operatorname{dist}}

\title{Structural Bottlenecks in QAOA Mixer Design: Transport and Causal Reach}

\author{
  Jasjyot Singh Gulati \\
  \small\textit{IIIT-Delhi, India} \\
  \small\texttt{jasjyot23257@iiitd.ac.in}
  \and
  Tharrmashastha SAPV\\
  \small\textit{EulerQ, India} \\
  \small\texttt{tharrma@eulerq.com}
}
\date{}

\begin{document}
\maketitle

\begin{abstract}
The choice of mixer is a central design decision in the Quantum Approximate Optimization Algorithm (QAOA), yet different structural limitations of a mixer need not be captured by a single metric. We identify two complementary structural bottlenecks for shallow QAOA: configuration-space transport, which limits how strongly probability can concentrate on high-quality solutions, and finite-depth causal reach, which limits which parts of a problem instance can influence a local objective term. In Part~I, we derive first-order mixer-dependent objective bounds in complementary small-$\beta$ and small-$\gamma$ regimes. For connected regular mixer graphs, the two leading-order expressions are exactly equivalent through a graph-Laplacian identity and reduce to the same cost-weighted Dirichlet energy. We then derive a non-perturbative upper envelope from the layer-wise $\ell_1$ spreading of the mixer unitary, yielding a cost-distribution-dependent necessary spreading condition for attaining a target approximation ratio. In Part~II, we study the locality bottleneck for MaxCut by augmenting the transverse-field mixer with problem-graph-aligned two-local interactions. The resulting $k$-hop construction has causal radius at most $(k+1)p$ at depth $p$; for the $k=1$ architectures studied numerically, this gives radius $2p$. Across cycles, circular ladders, and high-girth regular bipartite graphs, $X{+}XX$ and $X{+}YY$ improve the best-found approximation ratios over the standard $X$ mixer at equal depth, while $X{+}YY$ at depth $p$ comes within about one approximation-ratio point of plain $X$ at depth $2p$ in the reach-matched cases examined. The unequal performance of the equal-reach $XX$ and $YY$ augmentations shows that satisfying a structural capability does not by itself guarantee performance. Together, the results provide complementary structural insights for mixer design.
\end{abstract}

\textbf{Keywords:} 
Quantum Approximate Optimization Algorithm (QAOA), Mixer Hamiltonians, State-Space Exploration, Operator Lightcone Expansion, $L_1$-Spreading Norm, Perturbative Bounds, 2-Local Hamiltonians, Mixer Augmentation

\section{Introduction}
\label{sec:intro}

The Quantum Approximate Optimization Algorithm (QAOA)~\cite{farhi2014quantum} alternates problem-dependent phase operators with mixer unitaries. The mixer determines which computational or feasible configurations are coupled and how amplitude can be transported between them, while the cost phases determine how amplitudes associated with different paths interfere. Consequently, shallow-QAOA performance can be limited by more than one structural property of the mixer. In this work, we identify two complementary structural bottlenecks: configuration-space transport and finite-depth causal reach. The first limits how strongly the output distribution can concentrate on high-quality configurations; the second limits which portions of the problem instance can influence a local objective term at a fixed circuit depth.

Part~I studies the configuration-space transport by viewing the mixer as a graph on the feasible configuration space. We derive first-order bounds on the QAOA objective in two complementary parameter regimes. The small-$\beta$ analysis exposes the dependence on cost differences across mixer edges, while the small-$\gamma$ analysis gives a spectral description involving the eigenvalues of a regular mixer and the overlaps of the cost Hamiltonian with its eigenmodes. For connected regular mixer graphs, these apparently different descriptions are exactly equivalent through a graph-Laplacian identity: both reduce to the Dirichlet energy
\[
\mathcal{E}_M
=
\sum_{(x,y)\in E_M}(C(x)-C(y))^2.
\]
We then derive a non-perturbative upper envelope using the $\ell_1$ spreading of each mixer layer. Read in reverse, this yields a necessary spreading condition for reaching a target approximation ratio: insufficient transport can rule out the target, although sufficient transport alone does not certify that it will be attained.

Part~II studies the locality bottleneck. For an edge-local MaxCut observable, only a bounded neighborhood of the corresponding problem edge can influence its expectation value at finite depth. We therefore augment the transverse-field mixer with two-local interactions supported on the problem graph. The resulting $k$-hop construction admits a causal-radius bound of at most $(k+1)p$; for the $k=1$ case studied numerically, this gives radius $2p$ while preserving a simple problem-aligned architecture. For cycles, circular ladders, and high-girth regular bipartite graphs, symmetry and local tree structure allow the full objective to be reduced exactly to a finite set of representative edge neighborhoods.

These two analyses characterize complementary structural limitations rather than providing sufficient predictors of performance. Violating the relevant transport or locality requirement can preclude a desired behavior, but satisfying either requirement alone need not guarantee a high-quality QAOA solution. The comparison between equal-reach $X{+}XX$ and $X{+}YY$ augmentations makes this distinction concrete: the two circuits have the same causal-radius bound but attain different best-found approximation ratios, indicating that phase-sensitive operator structure remains important beyond causal reach alone.

\section{Related Work}
\label{sec:related-work}

QAOA~\cite{farhi2014quantum} and the Quantum Alternating Operator Ansatz~\cite{hadfield2019quantum} motivate the use of problem- and constraint-specific mixers rather than a single universal transverse-field construction. This has led to a broad range of constraint-preserving and problem-specific mixing operators, including $XY$-based constructions~\cite{fuchs2022constraint,rieffel2020xy,bartschi2020grover}, adaptive mixer schemes~\cite{zhu2022adaptive}, and broader studies of mixer design~\cite{blekos2024review}. 

One way to analyze QAOA performance is via locality. At fixed depth, a local cost observable can depend only on a bounded neighborhood of the circuit interaction graph~\cite{farhi2014quantum,farhi2020quantum}. Related approaches have modified QAOA interaction geometry to increase the portion of an instance visible to a shallow circuit. The Phantom Edges ansatz, for example, uses additional problem-graph structure to increase graph visibility with a single additional parameter and demonstrates improvements on several MaxCut families~\cite{langfitt2024phantom}. More recently, Kumar and Tahir~\cite{kumar2026dealing} introduced transport-augmented QAOA in which additional mixer couplings modify the interaction geometry and derived finite-depth support bounds for such architectures. Our Part~II is narrower in architectural scope: we study problem-edge-aligned $k=1$ augmentations, exploit the resulting local symmetry to obtain exact finite representative-neighborhood reductions for structured graph families, and compare $XX$ and $YY$ augmentations under the same one-parameter-per-layer overhead.

Recent work has also emphasized the distinction between mixer-induced amplitude transport and phase-sensitive interference in quantum optimization~\cite{onah2026separating}. This distinction is directly relevant to our non-perturbative Part~I bound. Our $\ell_1$ spreading quantity provides a phase-blind upper envelope on the probability concentration that a mixer layer can make possible, whereas the actual QAOA objective still depends on coherent phase accumulation. We therefore use transport as a structural bound rather than as a claim that the mixer itself preferentially targets high-cost states.

The present work treats configuration-space transport and physical causal reach as distinct structural aspects of mixer design rather than attempting to collapse them into a single mixer-quality score. Part~I provides transport-based necessary conditions, while Part~II characterizes the finite-depth locality constraint and studies a problem-aligned way to enlarge the accessible neighborhood. The comparison between equal-reach $XX$ and $YY$ augmentations further shows that causal reach is a necessary capability only when the relevant problem information lies outside the original lightcone; it is not, by itself, a sufficient predictor of the observed ordering of mixer performance.
\section{Contributions}
\label{sec:intro-contributions}

\begin{enumerate}[leftmargin=*]

\item \textbf{Transport bottleneck: equivalent edge-space and spectral first-order bounds (Part~I).}
We derive first-order objective-surrogate sandwiches in small-$\beta$ and small-$\gamma$ parameter regimes and show that, for connected regular mixer graphs, the two leading-order expressions coincide exactly through a graph-Laplacian identity. We then control the truncation error to obtain corresponding rigorous bounds for the exact box-restricted QAOA objective.

\item \textbf{Transport bottleneck: non-perturbative spreading envelope (Part~I).}
We derive a non-perturbative pointwise probability bound from the layer-wise $\ell_1$ spreading of the mixer unitary and solve the associated linear-program relaxation exactly. This yields a cost-distribution-dependent upper envelope and a necessary spreading/depth condition for attaining a target approximation ratio: below the required spreading threshold, the target cannot be reached.

\item \textbf{Locality bottleneck and problem-aligned augmentation (Part~II).}
Building on the work of \cite{kumar2026dealing}, we formalize problem-graph-aligned $k$-hop two-local mixer augmentation and derive a causal-radius bound of at most $(k+1)p$. For $k=1$, local symmetry reduces the MaxCut objective on cycles, circular ladders, and high-girth regular bipartite graphs to finite representative lightcones. Both $X{+}XX$ and $X{+}YY$ improve the best-found approximation ratios over the standard $X$ mixer at equal depth in all studied settings, while their unequal performance despite equal causal reach demonstrates that satisfying the locality capability is not sufficient to determine performance.

\end{enumerate}

\section{Preliminaries}
\label{sec:prelims}

\subsection{QAOA: Quantum Alternating Operator Ansatz}
\label{sec:prelims-qaoa}
Combinatorial optimization (constrained or unconstrained) problems over $n$ bits can be written as a sum of $l$ clauses:
\[
C(x) = \sum_{a=1}^{l} C_a(x), \qquad x \in F\subseteq \{0,1\}^n
\]
where $C_a(x)$ depends on a subset of the bits of $x$.
Under QAOA, the cost function is encoded as a Hamiltonian diagonal in the computational basis, with cost unitary
\begin{gather*}
    H_C = \sum_{x\in F}C(x)\ket{x}\bra{x},\qquad
    U_C(\gamma) = e^{-i\gamma H_C}.
\end{gather*}
For Part~I, we model the mixer Hamiltonian as the adjacency matrix of an unweighted, undirected graph over the feasible basis states~\cite{fuchs2022constraint}:
\begin{gather*}
    H_M = \sum_{x,y\in F} (H_M)_{xy}\ket{x}\bra{y},\quad U_M(\beta) = e^{-i\beta H_M}
\end{gather*}
The depth-$p$ QAOA circuit is defined as
\begin{gather*}
    \ket{\psi_p} = \prod_{k=1}^{p}U_M(\beta_{k})U_C(\gamma_{k})\ket{\psi_0},\quad \ket{\psi_{0}} = \frac{1}{\sqrt{|F|}}\sum_{y\in F}\ket{y}
\end{gather*}
The optimized depth-$p$ objective is
\begin{gather*}
    M_p=\max_{(\gamma_1,\beta_1,\dots,\gamma_p,\beta_p)} \bra{\psi_p}H_C\ket{\psi_p}
\end{gather*}
where the depth-$p$ QAOA objective expectation value is
\begin{equation}
\bra{\psi_p} \HC \ket{\psi_p}
=
\bra{\psi_0}
\Bigg(\prod_{k=p}^{1} e^{i\gamma_k \HC} e^{i\beta_k \HM}\Bigg)
\HC
\Bigg(\prod_{k=1}^{p} e^{-i\beta_k \HM} e^{-i\gamma_k \HC}\Bigg)
\ket{\psi_0}.
\label{eq:qaoa-expectation}
\end{equation}

\part{Bounds on the QAOA Objective from Mixer Structure}
\label{part:bounds}
\section{Perturbative Bounds via Mixer Structure}
\label{sec:bounds}

Throughout, we assume $\beta_k \in [0,\eta], \gamma_k \in [0,\delta]$ for $\eta,\delta> 0$
for all $k = 1, \dots, p$; for each regime below we bound $\eta$ and $\delta$
separately.\\
We also shift the cost function so that $C(x)\ge 0$ for all
$x\in F$ (via $C(x) := C(x) - C_{\min}$), which leaves the optimizing
parameters and optimal bit string unchanged while shifting the objective by
the constant $-C_{\min}$. Unless otherwise stated, all quantities below are
defined with respect to this shifted cost function.

This part analyzes configuration-space transport: whether the mixer provides sufficient transport across the feasible configuration space. We begin with first-order expectation-value surrogates in two complementary regimes, small $\beta$ (\Cref{sec:small-beta-derivation}) and small $\gamma$ (\Cref{sec:small-gamma-derivation}). Both follow the same three-step methodology: nested-commutator truncation, a box restriction that makes the resulting sine monotone, and a Jordan's-inequality sandwich bound. We then show that, for connected regular mixers, the two first-order sandwiches are exactly equivalent through a graph-Laplacian identity. Section~\ref{sec:error-bounds} subsequently controls the discarded terms and converts these surrogate bounds into rigorous bounds on the exact box-restricted QAOA objective.

\subsection{Nested-Commutator Expansions}
\label{sec:expansions}

\begin{lemma}[Nested-commutator expansion in $\beta$]
\label{lem:small-beta}
For an arbitrary depth-$p$ QAOA circuit,
\begin{align}
&\Bigg(\prod_{k=p}^{1} e^{i\gamma_k \HC} e^{i\beta_k \HM}\Bigg)
\HC
\Bigg(\prod_{k=1}^{p} e^{-i\beta_k \HM} e^{-i\gamma_k \HC}\Bigg) \\
&=
\sum_{n_1,\dots,n_p=0}^{\infty}
\left(\prod_{\ell=1}^p \frac{(i\beta_\ell)^{n_\ell}}{n_\ell!}\right)
e^{i\gamma_p \HC}\,
\com_{\HM}^{n_p}\!\Big(\cdots e^{i\gamma_1 \HC}\com_{\HM}^{n_1}e^{-i\gamma_1 \HC}(\HC)\cdots\Big)\,
e^{-i\gamma_p \HC},
\label{eq:small-beta-expansion}
\end{align}
where $\com_A^n(B) \coloneqq [A,[A,\dots,[A,B]\dots]]$ denotes the $n$-fold
nested commutator.
\end{lemma}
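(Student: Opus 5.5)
The plan is to apply the Hadamard (Baker--Campbell--Hausdorff adjoint) lemma once per layer, working from the innermost conjugation outward. For bounded operators on the finite-dimensional space spanned by $F$, and any scalar $t$,
\[
e^{itA}Be^{-itA}=\sum_{n=0}^{\infty}\frac{(it)^n}{n!}\com_A^n(B).
\]
This converges absolutely because $\|\com_A^n(B)\|\le (2\|A\|)^n\|B\|$. To verify it, observe that $f(t)=e^{itA}Be^{-itA}$ satisfies $f'(t)=i[A,f(t)]$. Hence $f^{(n)}(0)=i^n\com_A^n(B)$, and $f$ is entire, so it equals its Taylor series.

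First, make the nesting explicit. Reading the product in \eqref{eq:qaoa-expectation}, the conjugated operator is
\[
\mathrm{Ad}_{\gamma_p}\mathrm{Ad}_{\beta_p}\cdots\mathrm{Ad}_{\gamma_1}\mathrm{Ad}_{\beta_1}(\HC),
\]
where $\mathrm{Ad}_{\beta}(B)=e^{i\beta\HM}Be^{-i\beta\HM}$ and $\mathrm{Ad}_{\gamma}(B)=e^{i\gamma\HC}Be^{-i\gamma\HC}$. Here I take the innermost factors in the product ordering to be $e^{i\beta_1\HM}\,\cdot\,e^{-i\beta_1\HM}$, matching the stated formula in which $\com^{n_1}_{\HM}$ acts directly on $\HC$.

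Next, expand each $\mathrm{Ad}_{\beta_\ell}$ by the Hadamard lemma with $A=\HM$, $t=\beta_\ell$, and leave each $\mathrm{Ad}_{\gamma_\ell}$ unexpanded as a conjugation.

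Then proceed by induction on $p$. The case $p=1$ is the Hadamard lemma followed by the outer $\gamma_1$ conjugation. For the inductive step, write the depth-$p$ operator as
\[
\mathrm{Ad}_{\gamma_p}\mathrm{Ad}_{\beta_p}\bigl(B_{p-1}\bigr),
\]
with $B_{p-1}$ the depth-$(p-1)$ series. Apply the lemma to $\mathrm{Ad}_{\beta_p}$. Then use linearity of $\com_{\HM}^{n_p}$ and of conjugation by $e^{i\gamma_p\HC}$ to push them inside the $(p-1)$-fold sum, which produces the multi-index sum with the stated coefficients $\prod_\ell (i\beta_\ell)^{n_\ell}/n_\ell!$.

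The only real obstacle is justifying this interchange of the operator-valued infinite sums with the linear maps $\com_{\HM}$ and conjugation. I would handle it with absolute convergence. Conjugation by a unitary preserves norm, and $\com_{\HM}$ at most multiplies norm by $2\|\HM\|$. This gives the bound
\[
\|\text{term}_{n_1,\dots,n_p}\|\le \prod_\ell \frac{(2|\beta_\ell|\|\HM\|)^{n_\ell}}{n_\ell!}\,\|\HC\|,
\]
whose multi-sum is $e^{2\|\HM\|\sum_\ell|\beta_\ell|}\|\HC\|<\infty$. Absolute convergence in finite dimension then permits arbitrary rearrangement and termwise application of bounded linear maps.

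One bookkeeping point must be settled while writing the proof: the displayed formula places $e^{\pm i\gamma_1\HC}$ inside $\com^{n_1}_{\HM}$, whereas the product ordering in \eqref{eq:qaoa-expectation} suggests the $\gamma_1$ conjugation acts after $\com^{n_1}$. I will match the conjugation ordering to the product convention and note the intended reading of the nested notation.
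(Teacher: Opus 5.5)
Your proposal is correct and follows the paper's proof: induction on $p$, applying Hadamard's lemma to $e^{i\beta_p\HM}\mathcal{O}_{p-1}e^{-i\beta_p\HM}$ at each layer and pushing the commutators and the $\gamma_p$-conjugation through the depth-$(p-1)$ series. Your absolute-convergence argument supplies the interchange justification the paper leaves implicit, and your reading of the nested notation, with $e^{\pm i\gamma_1\HC}$ conjugating $\com_{\HM}^{n_1}(\HC)$ from outside, is the intended one, so there is no real mismatch with the product ordering.
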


\begin{proof}
We induct on $p$. For $p=1$, Hadamard's lemma
$e^{A} B e^{-A} = \sum_{n=0}^\infty \frac{1}{n!}\, \com_A^n(B)$
with $A = i\beta_1 \HM$ gives the base case directly. Assume
\eqref{eq:small-beta-expansion} holds at depth $p-1$; write
$\mathcal{O}_{p-1}$ for the corresponding operator. Isolating layer $p$,
\[
\bra{\psi_p}H_C\ket{\psi_p}
= \sum_{n_p=0}^{\infty} \frac{(i\beta_p)^{n_p}}{n_p!}\, e^{i\gamma_p \HC}\, \com_{\HM}^{n_p}(\mathcal{O}_{p-1})\, e^{-i\gamma_p \HC},
\]
by Hadamard's lemma applied to $e^{i\beta_p\HM}\mathcal{O}_{p-1}e^{-i\beta_p\HM}$.
Substituting the induction hypothesis for $\mathcal{O}_{p-1}$ and collecting
scalar factors completes the step.
\end{proof}

\begin{lemma}[Nested-commutator expansion in $\gamma$]
\label{lem:small-gamma}
For an arbitrary depth-$p$ QAOA circuit,
\begin{align}
&\Bigg(\prod_{k=p}^{1} e^{i\gamma_k \HC} e^{i\beta_k \HM}\Bigg)
\HC
\Bigg(\prod_{k=1}^{p} e^{-i\beta_k \HM} e^{-i\gamma_k \HC}\Bigg)\\
&=
\sum_{n_1,\dots,n_p=0}^{\infty}
\left(\prod_{\ell=1}^p \frac{(i\gamma_\ell)^{n_\ell}}{n_\ell!}\right)
\com_{\HC}^{n_p}\Big(e^{i\beta_p \HM}\cdots \com_{\HC}^{n_1}\big(e^{i\beta_1\HM}\HC e^{-i\beta_1\HM}\big)\cdots e^{-i\beta_p \HM}\Big).
\label{eq:small-gamma-expansion}
\end{align}
\end{lemma}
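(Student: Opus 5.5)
We argue by induction on $p$, mirroring the proof of \Cref{lem:small-beta}, but now applying Hadamard's lemma with $A = i\gamma_k \HC$ rather than $i\beta_k\HM$. The key observation is that in the conjugated operator each layer contributes the pair $e^{i\gamma_k \HC}e^{i\beta_k\HM}(\cdot)e^{-i\beta_k\HM}e^{-i\gamma_k\HC}$. In this pair the mixer conjugation is innermost and the cost conjugation is outermost. So the $\gamma$-expansion acts on the already mixer-conjugated operator, and the $\beta$-dependence is left intact as unitary conjugations.

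\emph{Base case.} For $p=1$ we set $\mathcal{A}_1 \coloneqq e^{i\beta_1\HM}\HC e^{-i\beta_1\HM}$. Hadamard's lemma gives
\[
e^{i\gamma_1\HC}\mathcal{A}_1 e^{-i\gamma_1\HC}=\sum_{n_1=0}^\infty \frac{(i\gamma_1)^{n_1}}{n_1!}\com_{\HC}^{n_1}(\mathcal{A}_1),
\]
which is exactly \eqref{eq:small-gamma-expansion} at depth one.

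\emph{Inductive step.} Let $\mathcal{O}_{p-1}$ denote the depth-$(p-1)$ conjugated operator, assumed to equal the right-hand side of \eqref{eq:small-gamma-expansion} with $p-1$ layers. The depth-$p$ operator is $e^{i\gamma_p\HC}\big(e^{i\beta_p\HM}\mathcal{O}_{p-1}e^{-i\beta_p\HM}\big)e^{-i\gamma_p\HC}$. Applying Hadamard's lemma to the outer conjugation yields
\[
\sum_{n_p}\frac{(i\gamma_p)^{n_p}}{n_p!}\com_{\HC}^{n_p}\big(e^{i\beta_p\HM}\mathcal{O}_{p-1}e^{-i\beta_p\HM}\big).
\]
We then substitute the series for $\mathcal{O}_{p-1}$. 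Since conjugation by $e^{i\beta_p\HM}$ and $\com_{\HC}^{n_p}$ are both linear maps, they pass through the sum over $(n_1,\dots,n_{p-1})$ and through the scalar prefactors. Collecting the scalars into the product $\prod_\ell (i\gamma_\ell)^{n_\ell}/n_\ell!$ gives the claimed nested form.

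\emph{Convergence.} The only point needing care is justifying the exchange of the linear maps with the infinite sums and the resulting rearrangement into a multi-index series. In the finite-dimensional setting here, $\|\com_{\HC}^{n}(B)\|\le (2\|\HC\|)^n\|B\|$, and unitary conjugation preserves the operator norm. The multi-series is therefore dominated by $\|\HC\|\prod_\ell e^{2|\gamma_\ell|\|\HC\|}$, so it converges absolutely. Absolute convergence in a finite-dimensional normed space justifies interchanging the bounded linear maps with the summation and reordering the terms, which completes the induction.
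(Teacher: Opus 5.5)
Your proof is correct and is the argument the paper uses: induction on $p$, with Hadamard's lemma applied to the outermost conjugation by $e^{i\gamma_p\HC}$ and the depth-$(p-1)$ series substituted inside (the paper simply cites the $\beta$-case proof with the roles of $\HM$ and $\HC$ exchanged). Two of your points are correct refinements the paper leaves implicit: that the cost conjugation is outermost in each layer, so the exchange is not a literal symmetry of the circuit, and the absolute-convergence justification for the rearrangement.
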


\begin{proof}
Identical to the proof of \Cref{lem:small-beta}, after exchanging the roles
of $\HM\leftrightarrow\HC$ and $\beta\leftrightarrow\gamma$ throughout.
\end{proof}

\paragraph{{First-order truncation.}}
Truncating \eqref{eq:small-beta-expansion} to first order in $\beta$ and
using $e^{it\HC}H_C e^{-it\HC}= H_C$,
\begin{equation}
A_p = \HC + i\sum_{m=1}^p \beta_m\, e^{i\Gamma_m \HC}\,[\HM,\HC]\,e^{-i\Gamma_m \HC},
\qquad
\Gamma_m \coloneqq \sum_{k=m}^p \gamma_k.
\label{eq:first-order-operator}
\end{equation}
Truncating \eqref{eq:small-gamma-expansion} to first order in $\gamma$,
\begin{equation}
B_p =
e^{i\Sigma_p \HM} \HC e^{-i\Sigma_p \HM}
+ \sum_{m=1}^p i\gamma_m\, e^{i(\Sigma_p-\Sigma_m)\HM}
\Big[\HC,\; e^{i\Sigma_m \HM}\HC e^{-i\Sigma_m \HM}\Big]
e^{-i(\Sigma_p-\Sigma_m)\HM},
\label{eq:first-order-gamma}
\end{equation}
where $\Sigma_m \coloneqq \sum_{k=1}^m \beta_k.$

\paragraph{{Expectation values.}}

Taking $\bra{\psi_0}A_p\ket{\psi_0}$ with $\ket{\psi_0}=|F|^{-1/2}\sum_x\ket{x}$
and using the symmetry $(\HM)_{xy}=(\HM)_{yx}$,
\begin{equation}
\bra{\psi_0}A_p\ket{\psi_0}
=
\bar C
+
\frac{2}{|F|}
\sum_{m=1}^p \beta_m
\sum_{\substack{x,y\in F\\ x<y}}
(\HM)_{xy}\bigl(C(y)-C(x)\bigr)
\sin\!\Bigl(\Gamma_m\bigl(C(y)-C(x)\bigr)\Bigr),
\label{eq:first-order-expectation}
\end{equation}
where $\bar C\coloneqq |F|^{-1}\textstyle\sum_{x}C(x)$.

For $\bra{\psi_0}B_p\ket{\psi_0}$, assume a connected $d$-regular mixer
graph, write $\HM=\sum_k\lambda_k\ket{\phi_k}\bra{\phi_k}$ for its (real,
orthonormal) eigendecomposition with top eigenvector
$\ket{\phi_0}=\ket{\psi_0}$ and $\lambda_0=d$, and write
$C_{ab}\coloneqq\bra{\phi_a}\HC\ket{\phi_b}$. Since
$e^{iq\HM}\ket{\psi_0}=e^{iqd}\ket{\psi_0}$, the global phase in
\eqref{eq:first-order-gamma} cancels against its conjugate, and a direct
computation in the $\HM$-eigenbasis gives, for any $q$,
\begin{equation}
\bra{\psi_0}\big[\HC,\, e^{iq\HM}\HC e^{-iq\HM}\big]\ket{\psi_0}
= 2i\sum_{b\ne 0} |C_{0b}|^2 \sin\!\big(q(\lambda_b-\lambda_0)\big),
\end{equation}
$\bra{\psi_0}\HC\ket{\psi_0}=C_{00}=\bar C$ and
$\sum_{b\ne0}|C_{0b}|^2=\bra{\psi_0}\HC^2\ket{\psi_0}-\bar C^2$ is exactly
the variance of $C$ under the uniform distribution on $F$; we write
$\Var(C)\coloneqq\sum_{b\ne0}|C_{0b}|^2$ for this quantity below. Hence
\begin{equation}
\bra{\psi_0}B_p\ket{\psi_0}
= \bar C
+ 2\sum_{m=1}^p \gamma_m \sum_{b\ne 0} |C_{0b}|^2 \sin\!\big(\Sigma_m(\lambda_0-\lambda_b)\big).
\label{eq:gamma-expectation}
\end{equation}

\subsection{Small-\texorpdfstring{$\beta$}{beta} Bound}
\label{sec:small-beta-derivation}

We restrict to the positive box
\begin{equation}
0\le \beta_i\le \eta,\qquad
0\le \gamma_k\le \delta,\qquad
\delta \le \frac{\pi}{2p\,(C_{\max}-C_{\min})}.
\label{eq:delta-condition}
\end{equation}
Under \eqref{eq:delta-condition}, $0\le \Gamma_m(C(y)-C(x))\le \pi/2$ for
every $m$ and every mixer edge, so $\sin$ is monotone on the admissible
range and \eqref{eq:first-order-expectation} is componentwise
non-decreasing in each $\beta_m,\gamma_k$.
Thus, the first-order surrogate $\bra{\psi_0}A_p\ket{\psi_0}$ is maximized at the corner
$\beta_i=\eta,\gamma_i=\delta$, where $\Gamma_m=(p-m+1)\delta$. Evaluating
there and applying Jordan's inequality $\tfrac2\pi t\le\sin t\le t$ on
$[0,\pi/2]$ termwise (using $\sum_{m=1}^p(p-m+1)=p(p+1)/2$) gives:

\begin{proposition}[{First-order small-$\beta$ objective-surrogate bound}]
\label{prop:linear-bounds}
Under \eqref{eq:delta-condition},
\begin{equation}
\bar C
+
\frac{2\eta\delta\,p(p+1)}{\pi|F|}\,\mathcal{E}_M
\;\le\;
\max_{\substack{0\le \beta_i\le\eta\\ 0\le \gamma_i\le\delta}}
\bra{\psi_0}A_p\ket{\psi_0}
\;\le\;
\bar C
+
\frac{\eta\delta\,p(p+1)}{|F|}\,\mathcal{E}_M,
\label{eq:linear-bounds}
\end{equation}

where $\mathcal{E}_M \coloneqq \sum_{\substack{(x,y)\in E_{\HM}}}
\bigl(C(y)-C(x)\bigr)^2$, $E_{\HM}\coloneqq\{(x,y):(\HM)_{xy}\ne0\}$, with each edge oriented
so that $C(x)<C(y)$.
\end{proposition}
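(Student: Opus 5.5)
The plan is to work directly from the explicit first-order expectation \eqref{eq:first-order-expectation}. Both sides of \eqref{eq:linear-bounds} then follow from elementary termwise estimates of the sine. First, orient each mixer edge so that $D_{xy}\coloneqq C(y)-C(x)\ge 0$. This is legitimate because the summand $(\HM)_{xy}D_{xy}\sin(\Gamma_m D_{xy})$ is even in $D_{xy}$. Edges with $D_{xy}=0$ contribute nothing to either the sum or $\mathcal{E}_M$, so they can be dropped. For an unweighted mixer, $(\HM)_{xy}=1$ on edges, and the double sum over $x<y$ becomes a sum over $E_{\HM}$, with each unordered edge counted once. This matches the normalization of $\mathcal{E}_M$.

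\textbf{Monotonicity and the corner.} Under \eqref{eq:delta-condition}, every $\Gamma_m\le p\delta$. Since $D_{xy}\le C_{\max}-C_{\min}$, every argument satisfies $\Gamma_m D_{xy}\in[0,\pi/2]$, where $\sin$ is nonnegative and increasing. Each summand is therefore nonnegative and non-decreasing in each $\gamma_k$, because $\Gamma_m$ is increasing in every $\gamma_k$. Each summand is also linear with a nonnegative coefficient in $\beta_m$. The surrogate is thus maximized over the box at $\beta_m=\eta$, $\gamma_k=\delta$, where $\Gamma_m=(p-m+1)\delta$, and the maximum equals
\[
\bar C+\frac{2\eta}{|F|}\sum_{m=1}^p\sum_{(x,y)\in E_{\HM}} D_{xy}\sin\bigl((p-m+1)\delta D_{xy}\bigr).
\]

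\textbf{Jordan sandwich.} Apply $\tfrac{2}{\pi}t\le\sin t\le t$ with $t=(p-m+1)\delta D_{xy}\in[0,\pi/2]$. Each term then lies between $\tfrac{2}{\pi}(p-m+1)\delta D_{xy}^2$ and $(p-m+1)\delta D_{xy}^2$. Summing over edges produces $\mathcal{E}_M$, and summing over $m$ uses $\sum_{m=1}^p(p-m+1)=p(p+1)/2$. The upper estimate becomes $\bar C+\tfrac{2\eta}{|F|}\cdot\tfrac{p(p+1)}{2}\delta\,\mathcal{E}_M$, which is the right-hand side of \eqref{eq:linear-bounds}. The lower estimate gains the extra factor $2/\pi$ and gives the left-hand side. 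Because the maximum is attained exactly at the corner, both bounds apply to the maximum itself.

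\textbf{Main obstacle.} There is little real difficulty here. The points requiring care are bookkeeping ones. The first is the factor of $2$ and the edge-counting convention, namely $x<y$ in \eqref{eq:first-order-expectation} versus oriented edges in $\mathcal{E}_M$. The second is checking that the constraint on $\delta$ keeps every argument in $[0,\pi/2]$ even for $m=1$, where $\Gamma_1=p\delta$. The third is noting that monotonicity in $\gamma$ holds only on this range; that is precisely why the box restriction is imposed.
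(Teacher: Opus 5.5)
Your proposal is correct and follows the paper's argument: coordinatewise monotonicity on the whole box (every argument $\Gamma_m D_{xy}$ lies in $[0,\pi/2]$), evaluation at the corner where $\Gamma_m=(p-m+1)\delta$, and a termwise Jordan sandwich using $\sum_{m}(p-m+1)=p(p+1)/2$. Your use of the evenness of $D_{xy}\sin(\Gamma_m D_{xy})$ to orient edges, and your check that $\mathcal{E}_M$ counts each unordered edge once, spell out bookkeeping that the paper leaves implicit.
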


\subsection{Small-\texorpdfstring{$\gamma$}{gamma} Bound}
\label{sec:small-gamma-derivation}

By the identical monotonicity argument, now restricting to
\begin{equation}
0\le \gamma_k\le \delta,\qquad
0\le \beta_k\le \eta,\qquad
\eta \le \frac{\pi}{2p\,(\lambda_0-\lambda_{\min})},
\label{eq:eta-condition}
\end{equation}
\eqref{eq:gamma-expectation} is componentwise non-decreasing on this box.
{The first-order surrogate $\bra{\psi_0}B_p\ket{\psi_0}$ is therefore maximized at the corner $\beta_i=\eta,\gamma_i=\delta$, where
$\Sigma_m=m\eta$. Applying Jordan's inequality termwise
(using $\sum_{m=1}^p m=p(p+1)/2$) gives the following first-order spectral sandwich:}

\begin{proposition}[{First-order small-$\gamma$ objective-surrogate bound}]
\label{prop:gamma-bound}
Under \eqref{eq:eta-condition},
\begin{equation}
\bar C
+
\frac{2\delta\eta\,p(p+1)}{\pi}\sum_{b\ne0}(\lambda_0-\lambda_b)|C_{0b}|^2
\;\le\;
\max_{\substack{0\le \beta_i\le\eta\\ 0\le \gamma_i\le\delta}}
\bra{\psi_0}B_p\ket{\psi_0}
\;\le\;
\bar C
+
\delta\eta\,p(p+1)\sum_{b\ne0}(\lambda_0-\lambda_b)|C_{0b}|^2.
\label{eq:gamma-bound}
\end{equation}
\end{proposition}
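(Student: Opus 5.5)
The proof starts from the closed form \eqref{eq:gamma-expectation} for $\bra{\psi_0}B_p\ket{\psi_0}$ and mirrors the small-$\beta$ argument behind \Cref{prop:linear-bounds}.

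\textbf{Step 1 (sign and range of the arguments).} Since $\lambda_0=d$ is the largest eigenvalue of the connected $d$-regular adjacency matrix, $\lambda_0-\lambda_b\ge 0$ for every $b$. In the box \eqref{eq:eta-condition} we have $0\le\Sigma_m\le m\eta\le p\eta$. Combined with $\lambda_0-\lambda_b\le\lambda_0-\lambda_{\min}$, this gives
\[
0\le\Sigma_m(\lambda_0-\lambda_b)\le \frac{\pi}{2}
\quad\text{for all } m \text{ and } b\ne 0 .
\]
Hence every sine in \eqref{eq:gamma-expectation} is nonnegative, and it lies on the monotone branch of $\sin$.

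\textbf{Step 2 (monotonicity and the corner).} Each $\Sigma_m$ is nondecreasing in every $\beta_k$. Together with Step 1, each summand $\gamma_m|C_{0b}|^2\sin(\Sigma_m(\lambda_0-\lambda_b))$ is a product of nonnegative factors, each nondecreasing in every box coordinate. The surrogate is therefore componentwise nondecreasing on the box, and its maximum is attained at the corner $\gamma_m=\delta$, $\beta_k=\eta$. There $\Sigma_m=m\eta$, and the maximum equals
\[
\bar C+2\delta\sum_{m=1}^p\sum_{b\ne0}|C_{0b}|^2\sin\bigl(m\eta(\lambda_0-\lambda_b)\bigr).
\]
The one point needing care is that $\Sigma_m$ couples the $\beta$'s, so monotonicity must be checked through $\Sigma_m$ rather than coordinatewise in a single variable. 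Beyond this I expect no real obstacle.

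\textbf{Step 3 (Jordan sandwich).} By Step 1, each argument $t=m\eta(\lambda_0-\lambda_b)$ lies in $[0,\pi/2]$. Applying Jordan's inequality $\tfrac{2}{\pi}t\le\sin t\le t$ termwise, and using $\sum_{m=1}^p m=p(p+1)/2$, the $m$-sum collapses to $\eta\,\tfrac{p(p+1)}{2}(\lambda_0-\lambda_b)$. Multiplying by $2\delta|C_{0b}|^2$ and summing over $b\ne 0$ yields exactly the upper and lower bounds in \eqref{eq:gamma-bound}.
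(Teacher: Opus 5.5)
Your proof is correct and follows the paper's argument: you start from the closed form \eqref{eq:gamma-expectation}, use \eqref{eq:eta-condition} to keep every argument $\Sigma_m(\lambda_0-\lambda_b)$ in $[0,\pi/2]$, deduce componentwise monotonicity so the maximum is attained at the corner $\gamma_m=\delta$, $\Sigma_m=m\eta$, and then apply Jordan's inequality termwise with $\sum_m m=p(p+1)/2$. You also state explicitly the nonnegativity $\lambda_0-\lambda_b\ge 0$ coming from $d$-regularity, which the paper leaves implicit in its ``identical monotonicity argument.''
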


Bounding each gap $\lambda_0-\lambda_b$ between the spectral gap
$\Delta\coloneqq\lambda_0-\lambda_1$ and the spectral spread
$\lambda_0-\lambda_{\min}$ gives the coarser, purely-graph-invariant
corollary used when the full spectrum of $\HM$ is unavailable:

\begin{corollary}
\label{cor:gamma-coarse}
Under \eqref{eq:eta-condition},
\begin{equation}
\bar C
+
\frac{2\Delta\,\delta\eta\,p(p+1)}{\pi}\,\Var(C)
\;\le\;
\max_{\substack{0\le \beta_i\le\eta\\ 0\le \gamma_i\le\delta}}
\bra{\psi_0}B_p\ket{\psi_0}
\;\le\;
\bar C
+
(\lambda_0-\lambda_{\min})\,\delta\eta\,p(p+1)\,\Var(C).
\label{eq:gamma-edge-set-dominates}
\end{equation}
\end{corollary}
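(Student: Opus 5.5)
The plan is to derive \Cref{cor:gamma-coarse} directly from \Cref{prop:gamma-bound}. The corollary only coarsens the spectral weights $(\lambda_0-\lambda_b)$, so no further analysis of the QAOA objective is needed. Both sides of \eqref{eq:gamma-bound} depend on the spectrum only through the single sum
\[
S \coloneqq \sum_{b\ne0}(\lambda_0-\lambda_b)\,|C_{0b}|^2 ,
\]
and the work is to sandwich $S$ between multiples of $\Var(C)=\sum_{b\ne0}|C_{0b}|^2$.

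The key spectral facts follow from the setup preceding \eqref{eq:gamma-expectation}. The mixer graph is connected and $d$-regular, and $\HM$ is its adjacency matrix. By Perron--Frobenius, $\lambda_0=d$ is a simple eigenvalue with eigenvector $\ket{\psi_0}$. We label the remaining eigenvalues $\lambda_1\ge\lambda_2\ge\cdots\ge\lambda_{\min}$, so that $\Delta=\lambda_0-\lambda_1$. Then for every $b\ne0$,
\[
0<\Delta\le \lambda_0-\lambda_b\le \lambda_0-\lambda_{\min}.
\]
Positivity of $\Delta$ uses connectedness. Only nonnegativity is actually needed, since $\Delta=0$ just makes the lower bound trivial.

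Since each weight $|C_{0b}|^2\ge0$, multiplying these inequalities termwise and summing gives
\[
\Delta\,\Var(C)\;\le\; S\;\le\;(\lambda_0-\lambda_{\min})\,\Var(C).
\]
The prefactors in \eqref{eq:gamma-bound} are $\tfrac{2\delta\eta p(p+1)}{\pi}$ on the left and $\delta\eta p(p+1)$ on the right. Both are nonnegative because $\delta,\eta>0$ and $p\ge1$. Substituting the lower estimate for $S$ into the left inequality and the upper estimate into the right inequality of \Cref{prop:gamma-bound} yields \eqref{eq:gamma-edge-set-dominates}. The hypothesis \eqref{eq:eta-condition} is the same as in the proposition, so nothing further is required.

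The main point needing care is the ordering convention. $\lambda_1$ must denote the largest eigenvalue other than $\lambda_0$, so that $\Delta$ lower-bounds every gap. This uses the simplicity of $\lambda_0$ from connectedness, which also makes the identification $\sum_{b\ne0}|C_{0b}|^2=\Var(C)$ match the definition. I would state this convention explicitly. Everything else is monotone substitution with nonnegative coefficients.
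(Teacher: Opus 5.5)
Your proposal is correct and follows the paper's argument: the corollary comes from Proposition~\ref{prop:gamma-bound} by bounding each gap $\lambda_0-\lambda_b$ between $\Delta$ and $\lambda_0-\lambda_{\min}$, then substituting, using that the weights $|C_{0b}|^2$ and the prefactors are nonnegative. One minor point: the identity $\sum_{b\ne0}|C_{0b}|^2=\bra{\psi_0}\HC^2\ket{\psi_0}-\bar C^2$ needs only completeness of the eigenbasis with $\ket{\phi_0}=\ket{\psi_0}$, not simplicity of $\lambda_0$; simplicity matters only for $\Delta>0$.
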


\begin{proposition}[Equivalence of the perturbative regimes]
\label{prop:equivalence}
Let $\HM$ be the adjacency matrix of a connected $d$-regular graph on $F$.
Then
\begin{equation}
\mathcal{E}_M \;=\; |F|\sum_{b\ne0}(\lambda_0-\lambda_b)\,|C_{0b}|^2,
\label{eq:equivalence-identity}
\end{equation}
{and consequently the \emph{first-order} sandwiches \eqref{eq:linear-bounds} and
\eqref{eq:gamma-bound} are identical term for term. The rigorous exact-objective bounds obtained after adding the truncation errors need not have identical remainder terms.}
\end{proposition}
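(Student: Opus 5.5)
The plan is to rewrite both sides of \eqref{eq:equivalence-identity} as the same quadratic form in the graph Laplacian $L\coloneqq d\,I-\HM$ of the mixer graph, evaluated on the cost vector $c=(C(x))_{x\in F}$. On the edge side, each unordered edge of $E_{\HM}$ appears once in $\mathcal{E}_M$, because each edge is oriented so that $C(x)<C(y)$. Edges with $C(x)=C(y)$ contribute zero whatever convention is used, so we may sum over all unordered edges. The standard Dirichlet identity for a $d$-regular graph then applies:
\[
c^{\top}Lc \;=\; d\sum_x C(x)^2-\sum_{x,y}(\HM)_{xy}C(x)C(y) \;=\; \tfrac12\sum_{x,y}(\HM)_{xy}\bigl(C(x)-C(y)\bigr)^2 .
\]
To verify it, expand the square and use $\sum_y(\HM)_{xy}=d$. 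The right-hand side equals $\sum_{\{x,y\}\in E}(C(x)-C(y))^2=\mathcal{E}_M$. This uses the Part~I modeling assumption that $\HM$ is an unweighted 0/1 symmetric adjacency matrix, so that $(\HM)_{xy}^2=(\HM)_{xy}$.

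On the spectral side, expand $c$ in the orthonormal eigenbasis of $\HM$. Since $\HC$ is diagonal with entries $C(x)$, we have $\HC\ket{\psi_0}=|F|^{-1/2}\sum_x C(x)\ket{x}=|F|^{-1/2}c$. Therefore
\[
C_{0b}=\bra{\phi_b}\HC\ket{\phi_0}=|F|^{-1/2}\braket{\phi_b|c}.
\]
These are real, because the eigenvectors are real. Hence $c=|F|^{1/2}\sum_b C_{0b}\ket{\phi_b}$. Using $L\ket{\phi_b}=(d-\lambda_b)\ket{\phi_b}=(\lambda_0-\lambda_b)\ket{\phi_b}$, we obtain
\[
c^{\top}Lc=|F|\sum_b(\lambda_0-\lambda_b)|C_{0b}|^2 .
\]
The $b=0$ term vanishes because $\lambda_0=d$. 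This yields \eqref{eq:equivalence-identity}. Connectedness is only needed so that $\ket{\phi_0}=\ket{\psi_0}$ is the distinguished top eigenvector used in the definition of $C_{0b}$. If $\lambda_0$ were degenerate, the identity would still hold for any orthonormal completion, because those extra terms carry the factor $\lambda_0-\lambda_b=0$.

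For the term-by-term equivalence of the sandwiches, substitute the identity into \eqref{eq:linear-bounds}. This gives
\[
\frac{\eta\delta p(p+1)}{|F|}\,\mathcal{E}_M=\eta\delta p(p+1)\sum_{b\ne0}(\lambda_0-\lambda_b)|C_{0b}|^2 ,
\]
and likewise for the lower bound with the factor $2/\pi$. These are exactly the bounds of \Cref{prop:gamma-bound}. I will add a remark that the two sandwiches hold on different boxes, \eqref{eq:delta-condition} versus \eqref{eq:eta-condition}, and concern different surrogates $A_p$ and $B_p$. The claim is therefore only that the leading-order expressions coincide, not the truncation remainders, which matches the proposition's caveat.

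The main obstacle is bookkeeping rather than substance. The points to check are the factor of $2$ between ordered and unordered edge sums, the reality of $C_{0b}$ (so that $|C_{0b}|^2=C_{0b}^2$), and the normalization $|F|^{-1/2}$ that produces the $|F|$ prefactor.
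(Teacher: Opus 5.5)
Your proposal is correct and follows the paper's proof essentially step for step: you write $\mathcal{E}_M$ as the Laplacian quadratic form $C^\top L C$ with $L=dI-\HM$ using $d$-regularity, expand in the shared real eigenbasis of $L$ and $\HM$, and use $\phi_0(x)=|F|^{-1/2}$ to identify the expansion coefficients as $|F|^{1/2}C_{0b}$. Your additional remarks (the orientation and factor-of-two bookkeeping, and that the two sandwiches hold on different parameter boxes for different surrogates $A_p$ and $B_p$) are correct refinements that are consistent with the proposition's caveat.
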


\begin{proof}
Let $L\coloneqq dI-\HM$ denote the graph Laplacian of $\HM$, and write
$C=(C(x))_{x\in F}$ for the cost vector.

By definition,
\begin{equation}
\mathcal{E}_M
= \sum_{(x,y)\in E_{\HM}} \bigl(C(x)-C(y)\bigr)^2
= \tfrac12\sum_{x,y\in F} (\HM)_{xy}\bigl(C(x)-C(y)\bigr)^2
= C^\top L\, C,
\label{eq:step1}
\end{equation}
where the last equality is the standard Laplacian identity, using
$d$-regularity to write $\sum_y(\HM)_{xy}=d$ for every $x$.

Since $L$ and $\HM$ commute, they share the real orthonormal eigenbasis
$\{\phi_k\}$, with $L\phi_k = (d-\lambda_k)\phi_k = (\lambda_0-\lambda_k)\phi_k$.
Expanding $C=\sum_k c_k\phi_k$,
\begin{equation}
C^\top L\, C = \sum_{k} c_k^2\,(\lambda_0-\lambda_k)
= \sum_{k\ne0} c_k^2\,(\lambda_0-\lambda_k),
\label{eq:step2}
\end{equation}
the $k=0$ term vanishing since $\lambda_0=d$.

Since $\phi_0(x)=|F|^{-1/2}$ for every $x$,
\begin{equation}
C_{0k} \coloneqq \bra{\phi_0}\HC\ket{\phi_k}
= |F|^{-1/2}\sum_{x\in F} C(x)\,\phi_k(x)
= |F|^{-1/2}\, c_k,
\qquad\text{so}\qquad
c_k^2 = |F|\,|C_{0k}|^2.
\label{eq:step3}
\end{equation}

\smallskip\noindent\textit{Conclusion.}
Combining \eqref{eq:step1}--\eqref{eq:step3} gives
\eqref{eq:equivalence-identity}. Substituting it into
\eqref{eq:linear-bounds}, and using $\bar C=\bra{\psi_0}\HC\ket{\psi_0}=C_{00}$
{in both sandwiches, reproduces the first-order bound \eqref{eq:gamma-bound} term for term.}
\end{proof}
\begin{remark}[Spread governs the objective at leading order]
\label{rem:topology}
The quantity $\mathcal{E}_M$ is, in either coordinate system of
Proposition~\ref{prop:equivalence}, a classical Dirichlet-energy measure of how strongly the mixer graph connects configurations with different costs. In edge coordinates,
$\mathcal{E}_M=\sum_{(x,y)\in E_{\HM}}(C(y)-C(x))^2$ is large when
mixer edges connect states of substantially different cost. In
spectral coordinates, $\mathcal{E}_M/|F|=\sum_{b\ne0}(\lambda_0-\lambda_b)|C_{0b}|^2$
is large when $\HC$ has substantial overlap with mixer eigenmodes far from the top eigenvalue. Propositions~\ref{prop:linear-bounds} and~\ref{prop:gamma-bound}
therefore express the same leading-order dependence in two coordinate systems. At first order in the small-parameter regime, the mixer-dependent objective change is governed by this single spreading quantity; the exact objective additionally contains the higher-order corrections controlled in \Cref{sec:error-bounds}.
\end{remark}
\subsection{Truncation Error}
\label{sec:error-bounds}

We bound the error from truncating \eqref{eq:small-beta-expansion} and
\eqref{eq:small-gamma-expansion} at first order. Write
$E\coloneqq\bra{\psi_p}\HC\ket{\psi_p}-\bra{\psi_0}(A_p\text{ or }B_p)\ket{\psi_0}$
for the discarded higher-order terms; the two cases are bounded identically
apart from which operator plays the role of the expansion generator.

\smallskip\noindent\textit{Step 1: Bound each nested commutator.}
Using $\|\com_A^n(B)\|\le(2\|A\|)^n\|B\|$, the tail of
\eqref{eq:small-beta-expansion} (resp.\ \eqref{eq:small-gamma-expansion})
starting at total order $\sum_\ell n_\ell\ge2$ satisfies
\begin{equation}
\|E\|
\;\le\;
\sum_{\substack{n_1,\dots,n_p\ge0\\ \sum_\ell n_\ell\ge2}}
\left(\prod_{\ell=1}^p \frac{|\theta_\ell|^{n_\ell}}{n_\ell!}\right)
(2\|X\|)^{\sum_\ell n_\ell}\,\|\HC\|,
\qquad
(X,\theta)=
\begin{cases}
(\HM,\beta) & \text{small-}\beta,\\
(\HC,\gamma) & \text{small-}\gamma,
\end{cases}
\label{eq:error-step1}
\end{equation}
using that unitary conjugation preserves the operator norm.

\smallskip\noindent\textit{Step 2: Sum the series.}
The sum in \eqref{eq:error-step1} is the tail (from order $2$) of the
exponential series in $2\|X\|\sum_\ell|\theta_\ell|$, so
\begin{equation}
\|E\| \;\le\; \Big(e^{\,2\|X\|\sum_\ell|\theta_\ell|} - 1 - 2\|X\|\textstyle\sum_\ell|\theta_\ell|\Big)\,\|\HC\|.
\label{eq:error-step2}
\end{equation}

\smallskip\noindent\textit{Step 3: Linearize.}
Whenever $2\|X\|\sum_\ell|\theta_\ell|\in[0,1]$, the inequality
$e^t-1-t\le t^2$ on $[0,1]$ applied to \eqref{eq:error-step2} gives
\begin{equation}
\|E\| \;\le\; 4\|X\|^2\Big(\sum_{\ell=1}^p|\theta_\ell|\Big)^{\!2}\|\HC\|.
\label{eq:error-step3}
\end{equation}

\smallskip\noindent\textit{Conclusion.}
Instantiating \eqref{eq:error-step3} for each regime,
\begin{equation}
\|E_\beta\| \le 4\|\HM\|^2\Big(\sum_{\ell=1}^p |\beta_\ell|\Big)^{\!2}\|\HC\|,
\qquad
\|E_\gamma\| \le 4\|\HC\|^2\Big(\sum_{\ell=1}^p |\gamma_\ell|\Big)^{\!2}\|\HC\|.
\label{eq:both-error-bounds}
\end{equation}

\begin{corollary}
\label{cor:eta-bound}
To ensure $\|E_\beta\| \le \epsilon$, it suffices to take
$\eta \le \frac{1}{2p\,\|\HM\|}\sqrt{\epsilon/\|\HC\|}$,
with $\epsilon$ small enough that $2p\|\HM\|\eta \le 1$ also holds.
\end{corollary}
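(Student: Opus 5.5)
The plan is to derive Corollary~\ref{cor:eta-bound} directly from the small-$\beta$ instance of \eqref{eq:both-error-bounds}, together with the box restriction $0\le\beta_\ell\le\eta$. This gives $\sum_{\ell=1}^p|\beta_\ell|\le p\eta$.

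First I would confirm that the linearization hypothesis behind \eqref{eq:error-step3} holds. In the small-$\beta$ regime we have $X=\HM$ and $\theta=\beta$, so the hypothesis reads $2\|\HM\|\sum_\ell|\beta_\ell|\in[0,1]$. Since the sum is at most $p\eta$, this follows from $2p\|\HM\|\eta\le1$, which the statement imposes as a side condition. Under it, \eqref{eq:error-step2} may legitimately be replaced by the quadratic bound \eqref{eq:error-step3}.

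Second, I would substitute $\sum_\ell|\beta_\ell|\le p\eta$ into the first inequality of \eqref{eq:both-error-bounds}. This gives
\[
\|E_\beta\|\le 4\|\HM\|^2p^2\eta^2\|\HC\|.
\]
The right-hand side is at most $\epsilon$ exactly when $\eta\le \frac{1}{2p\|\HM\|}\sqrt{\epsilon/\|\HC\|}$, obtained by taking square roots with all quantities nonnegative. Any $\eta$ satisfying the stated bound therefore yields $\|E_\beta\|\le\epsilon$.

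The only delicate point is the compatibility of the two conditions. Note that $2p\|\HM\|\eta\le\sqrt{\epsilon/\|\HC\|}$, so the side condition holds automatically whenever $\epsilon\le\|\HC\|$. This is the sense of ``$\epsilon$ small enough''. For larger $\epsilon$, one simply takes $\eta$ to be the minimum of the two thresholds. I would also note the degenerate cases: if $\HC=0$ then $E_\beta=0$ trivially, and if $\HM=0$ then the bound is vacuous. Either can be excluded or handled in one line.
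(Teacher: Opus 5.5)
Your proposal is correct and follows the paper's own (implicit) derivation: bound $\sum_\ell|\beta_\ell|\le p\eta$, invoke the quadratic truncation bound \eqref{eq:error-step3}, whose hypothesis is guaranteed by the side condition $2p\|\HM\|\eta\le1$, and then solve $4p^2\eta^2\|\HM\|^2\|\HC\|\le\epsilon$ for $\eta$. Your observation that the side condition holds automatically once $\epsilon\le\|\HC\|$ usefully pins down the paper's phrase ``$\epsilon$ small enough''.
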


\begin{corollary}
\label{cor:delta-bound}
To ensure $\|E_\gamma\|\le\epsilon$, it suffices to take
$\delta \le \frac{1}{2p\,\|\HC\|}\sqrt{\epsilon/\|\HC\|}$,
with $\epsilon$ small enough that $2p\|\HC\|\delta\le1$ also holds.
\end{corollary}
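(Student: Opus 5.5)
The plan is to specialize the general truncation estimate \eqref{eq:error-step3} to the small-$\gamma$ regime and then solve the resulting inequality for $\delta$. In this regime the expansion generator is $X=\HC$ and the small parameters are $\theta_\ell=\gamma_\ell$. Step~3 of the truncation analysis requires the linearization hypothesis
\[
2\|\HC\|\sum_{\ell=1}^p|\gamma_\ell|\in[0,1].
\]
On the box $0\le\gamma_\ell\le\delta$ we have $\sum_\ell|\gamma_\ell|\le p\delta$. So this hypothesis holds whenever $2p\|\HC\|\delta\le 1$, which is exactly the side condition in the statement. Under that condition the second bound in \eqref{eq:both-error-bounds} applies.

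Next, substitute $\sum_\ell|\gamma_\ell|\le p\delta$ into that bound:
\[
\|E_\gamma\|\le 4\|\HC\|^2p^2\delta^2\|\HC\|=4p^2\delta^2\|\HC\|^3.
\]
The right-hand side is monotone increasing in $\delta$. It is therefore enough to require $4p^2\delta^2\|\HC\|^3\le\epsilon$. Taking square roots gives
\[
\delta\le\frac{1}{2p\|\HC\|}\sqrt{\epsilon/\|\HC\|},
\]
which is precisely the stated threshold. Since $\|E_\gamma\|\le\epsilon$ then holds for every admissible $(\gamma,\beta)$ in the box, it holds in particular at the maximizer.

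There is no real obstacle here. The one point to handle explicitly is that the two conditions are consistent, so that the linearized bound is legitimately in force. Call this Step~1 of the check. Using the stated $\delta$, we get
\[
2p\|\HC\|\delta\le\sqrt{\epsilon/\|\HC\|},
\]
so the side condition is automatic once $\epsilon\le\|\HC\|$. This is what "$\epsilon$ small enough" means; I would note it as a remark.

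I would also flag the degenerate case $\|\HC\|=0$, where the shifted cost is constant. There $E_\gamma=0$ trivially and the formula should be read as imposing no constraint. The argument mirrors Corollary~\ref{cor:eta-bound}, with $\HM$ replaced by $\HC$ in the generator norm.
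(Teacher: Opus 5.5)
Your proposal is correct and is essentially the paper's argument. The corollary is read off from the small-$\gamma$ instance of \eqref{eq:both-error-bounds}: bound $\sum_\ell|\gamma_\ell|\le p\delta$ on the box, note that the Step~3 linearization hypothesis becomes exactly $2p\|\HC\|\delta\le1$, and solve $4p^2\delta^2\|\HC\|^3\le\epsilon$ for $\delta$. Your extra observation that this side condition holds automatically once $\epsilon\le\|\HC\|$ is correct, and it usefully clarifies the ``$\epsilon$ small enough'' clause.
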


\paragraph{Exact box-restricted objective bounds.}
Define
\[
M_p(\eta,\delta)\coloneqq
\max_{\substack{0\le \beta_i\le\eta\\0\le\gamma_i\le\delta}}
\bra{\psi_p}\HC\ket{\psi_p}.
\]
The uniform truncation bounds above convert the first-order surrogate
sandwiches into rigorous bounds on this exact QAOA objective.

\begin{corollary}[Exact small-$\beta$ objective bound]
\label{cor:exact-beta}
Assume the conditions of Proposition~\ref{prop:linear-bounds} and
$2p\|\HM\|\eta\le1$. Define
\[
\varepsilon_\beta\coloneqq
4p^2\eta^2\|\HM\|^2\|\HC\|.
\]
Then
\[
\bar C+
\frac{2\eta\delta\,p(p+1)}{\pi|F|}\mathcal{E}_M-\varepsilon_\beta
\le M_p(\eta,\delta)
\le
\bar C+
\frac{\eta\delta\,p(p+1)}{|F|}\mathcal{E}_M+\varepsilon_\beta .
\]
\end{corollary}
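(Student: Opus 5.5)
The plan is to combine the first-order surrogate sandwich of Proposition~\ref{prop:linear-bounds} with the uniform truncation bound \eqref{eq:both-error-bounds}. The key observation is that the truncation error is controlled \emph{pointwise} in the parameters, uniformly over the box. For every admissible $(\beta,\gamma)$ with $0\le\beta_i\le\eta$ and $0\le\gamma_i\le\delta$, write
\[
\bra{\psi_p}\HC\ket{\psi_p}=\bra{\psi_0}A_p\ket{\psi_0}+E_\beta(\beta,\gamma).
\]
Because $\ket{\psi_0}$ is a unit vector, $|E_\beta|\le\|E_\beta\|$, where $E_\beta$ here denotes the operator tail whose norm was bounded in \Cref{sec:error-bounds}. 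With $\sum_\ell|\beta_\ell|\le p\eta$ and the hypothesis $2p\|\HM\|\eta\le1$, the linearization step \eqref{eq:error-step3} applies, since $2\|\HM\|\sum_\ell\beta_\ell\le 2p\|\HM\|\eta\le 1$. It gives $|E_\beta|\le 4\|\HM\|^2p^2\eta^2\|\HC\|=\varepsilon_\beta$ at every point of the box.

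Next, I would pass from pointwise to maximized quantities. Using the uniform bound in both directions,
\[
\bra{\psi_0}A_p\ket{\psi_0}-\varepsilon_\beta\le\bra{\psi_p}\HC\ket{\psi_p}\le\bra{\psi_0}A_p\ket{\psi_0}+\varepsilon_\beta
\]
holds for all box parameters. Taking the maximum over the compact box (both functions are continuous) yields
\[
\max\bra{\psi_0}A_p\ket{\psi_0}-\varepsilon_\beta\le M_p(\eta,\delta)\le\max\bra{\psi_0}A_p\ket{\psi_0}+\varepsilon_\beta .
\]
The lower inequality follows by evaluating the exact objective at a maximizer of the surrogate. The upper inequality follows by evaluating the surrogate at a maximizer of the exact objective.

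Finally, I would substitute the two-sided bounds of Proposition~\ref{prop:linear-bounds} for $\max\bra{\psi_0}A_p\ket{\psi_0}$, which is valid under \eqref{eq:delta-condition}. This gives exactly the displayed sandwich.

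The only delicate point, and the one I expect to require care, is checking that the truncation bound of \Cref{sec:error-bounds} really is uniform and applies to the scalar expectation. The bound must depend on the $\gamma_k$ only through unitary conjugations, which preserve the operator norm; this is what makes $\varepsilon_\beta$ independent of $\delta$. Its applicability rests on the condition $2\|\HM\|\sum_\ell\beta_\ell\le1$, which is guaranteed by the stated hypothesis $2p\|\HM\|\eta\le1$ together with $\beta_\ell\ge0$.

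Two further checks are needed. First, the orderings $\le$ from Proposition~\ref{prop:linear-bounds} are unaffected by adding the error terms. Second, the expectation of the Hermitian tail is real, so the absolute-value bound is meaningful. The tail is Hermitian because both the full conjugated operator and $A_p$ are Hermitian.
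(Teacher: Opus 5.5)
Your proposal is correct and matches the paper's proof. You establish the uniform pointwise bound $\bigl|\bra{\psi_p}\HC\ket{\psi_p}-\bra{\psi_0}A_p\ket{\psi_0}\bigr|\le\varepsilon_\beta$ on the whole box from \eqref{eq:both-error-bounds}, transfer it to the two maxima, and then insert Proposition~\ref{prop:linear-bounds}. Your added checks make explicit what the paper leaves implicit: that $2\|\HM\|\sum_\ell\beta_\ell\le 2p\|\HM\|\eta\le 1$ licenses the linearization, and that $\varepsilon_\beta$ is independent of $\delta$ because $\gamma$ enters only through norm-preserving conjugations.
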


\begin{proof}
For every parameter vector in the admissible box,
\[
\left|
\bra{\psi_p}\HC\ket{\psi_p}
-
\bra{\psi_0}A_p\ket{\psi_0}
\right|
\le\varepsilon_\beta
\]
by \eqref{eq:both-error-bounds}. Therefore the maximum of the exact
objective differs from the maximum of the first-order surrogate by at most
$\varepsilon_\beta$. Applying Proposition~\ref{prop:linear-bounds} proves the claim.
\end{proof}

\begin{corollary}[Exact small-$\gamma$ objective bound]
\label{cor:exact-gamma}
Assume the conditions of Proposition~\ref{prop:gamma-bound} and
$2p\|\HC\|\delta\le1$. Define
\[
\varepsilon_\gamma\coloneqq
4p^2\delta^2\|\HC\|^3.
\]
Then
\[
\bar C+
\frac{2\delta\eta\,p(p+1)}{\pi}
\sum_{b\ne0}(\lambda_0-\lambda_b)|C_{0b}|^2
-\varepsilon_\gamma
\le M_p(\eta,\delta)
\]
and
\[
M_p(\eta,\delta)
\le
\bar C+
\delta\eta\,p(p+1)
\sum_{b\ne0}(\lambda_0-\lambda_b)|C_{0b}|^2
+\varepsilon_\gamma .
\]
\end{corollary}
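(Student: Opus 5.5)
The argument mirrors the proof of Corollary~\ref{cor:exact-beta}, with the roles of the two Hamiltonians exchanged. The plan has three steps: a uniform truncation bound over the whole box, a transfer of that bound to the maxima, and a final appeal to Proposition~\ref{prop:gamma-bound}.

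\emph{Uniform truncation bound.} Fix any parameter vector with $0\le\beta_i\le\eta$ and $0\le\gamma_i\le\delta$. Then $\sum_\ell|\gamma_\ell|\le p\delta$. The hypothesis $2p\|\HC\|\delta\le1$ gives $2\|\HC\|\sum_\ell|\gamma_\ell|\in[0,1]$, so Step~3 of \Cref{sec:error-bounds} applies with $(X,\theta)=(\HC,\gamma)$. By \eqref{eq:both-error-bounds},
\[
\|E_\gamma\|\le 4\|\HC\|^2(p\delta)^2\|\HC\|=\varepsilon_\gamma .
\]
Here $E_\gamma$ is the discarded operator tail in the expansion of Lemma~\ref{lem:small-gamma}. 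Since $\ket{\psi_0}$ is a unit vector, the expectation-value error satisfies
\[
\bigl|\bra{\psi_p}\HC\ket{\psi_p}-\bra{\psi_0}B_p\ket{\psi_0}\bigr|\le\|E_\gamma\|\le\varepsilon_\gamma .
\]
This holds at every point of the box.

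\emph{Transfer to maxima.} Both the exact objective and the surrogate are continuous functions on the compact box, so both maxima exist. Let $\theta^\ast$ maximize the surrogate and $\theta^\dagger$ maximize the exact objective. Then
\[
M_p(\eta,\delta)\ge\bra{\psi_p(\theta^\ast)}\HC\ket{\psi_p(\theta^\ast)}\ge\max\bra{\psi_0}B_p\ket{\psi_0}-\varepsilon_\gamma ,
\]
and
\[
M_p(\eta,\delta)=\bra{\psi_p(\theta^\dagger)}\HC\ket{\psi_p(\theta^\dagger)}\le\bra{\psi_0}B_p(\theta^\dagger)\ket{\psi_0}+\varepsilon_\gamma\le\max\bra{\psi_0}B_p\ket{\psi_0}+\varepsilon_\gamma .
\]
Together these give $\bigl|M_p(\eta,\delta)-\max\bra{\psi_0}B_p\ket{\psi_0}\bigr|\le\varepsilon_\gamma$.

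\emph{Conclusion.} Condition \eqref{eq:eta-condition} is assumed, so Proposition~\ref{prop:gamma-bound} sandwiches $\max\bra{\psi_0}B_p\ket{\psi_0}$ between
\[
\bar C+\tfrac{2\delta\eta p(p+1)}{\pi}\sum_{b\ne0}(\lambda_0-\lambda_b)|C_{0b}|^2
\quad\text{and}\quad
\bar C+\delta\eta p(p+1)\sum_{b\ne0}(\lambda_0-\lambda_b)|C_{0b}|^2 .
\]
Subtracting $\varepsilon_\gamma$ from the lower end and adding it to the upper end yields both displayed inequalities.

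The only step needing care is the first: checking that the box hypothesis keeps the linearization $e^t-1-t\le t^2$ valid at every point, not just at the corner. This holds because $\sum_\ell\gamma_\ell\le p\delta$ throughout the box. A secondary point is that the regular-graph assumptions behind \eqref{eq:gamma-expectation} are inherited from Proposition~\ref{prop:gamma-bound}.
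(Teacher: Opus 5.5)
Your proposal is correct and is essentially the paper's own argument. The paper also proves a uniform truncation bound $\bigl|\bra{\psi_p}\HC\ket{\psi_p}-\bra{\psi_0}B_p\ket{\psi_0}\bigr|\le\varepsilon_\gamma$ on the whole box, transfers it to the maxima, and then applies Proposition~\ref{prop:gamma-bound}; you simply make explicit the max-transfer step and why the linearization $e^t-1-t\le t^2$ holds at every point (since $\sum_\ell\gamma_\ell\le p\delta$), both of which the paper leaves implicit.
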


\begin{proof}
The proof is identical, using
\[
\left|
\bra{\psi_p}\HC\ket{\psi_p}
-
\bra{\psi_0}B_p\ket{\psi_0}
\right|
\le\varepsilon_\gamma
\]
uniformly over the admissible box and then applying
Proposition~\ref{prop:gamma-bound}.
\end{proof}

The exact-objective corollaries are informative when the corresponding remainder is small relative to the leading mixer-dependent signal, which is of order $\eta\delta p^2\mathcal{E}_M/|F|$ in \eqref{eq:linear-bounds} and is identical to the leading term of \eqref{eq:gamma-bound} by Proposition~\ref{prop:equivalence}. If the truncation error is of comparable or larger size, the first-order spreading dependence is no longer quantitatively isolating the exact objective.

\section{Spreading-Based Non-perturbative Analysis}
\label{sec:nonperturbative}

We now turn the transport viewpoint into a non-perturbative transport bound, without restricting the parameter regime. The bound depends on the mixer only through the $\ell_1$-norms of the columns of $e^{-i\beta_j\HM}$, which quantify the maximum amplitude spreading available from a basis state in one mixer layer. Importantly, this is a phase-blind capacity measure: large spreading does not guarantee useful interference, whereas insufficient spreading can rule out a target concentration of probability. Throughout, $\HM$ is real symmetric, so $K_j(a,b)\coloneqq\bra a e^{-i\beta_j\HM}\ket b$ satisfies $K_j(a,b)=K_j(b,a)$ and row and column $\ell_1$-norms of $e^{-i\beta_j\HM}$ coincide. Define the \emph{layer spreading}
\[
S_j\coloneqq\max_{l\in F}\bigl\|e^{-i\beta_j\HM}\ket l\bigr\|_1,
\qquad
\Lambda\coloneqq\prod_{j=1}^{p}S_j^2 .
\]
Since $e^{-i\beta_j\HM}\ket l$ is a unit vector, $1\le S_j\le\sqrt{|F|}$, hence $\Lambda\ge1$.

\subsection{A General Bound}

\begin{lemma}[Sum over $p$-length paths]
\label{lem:generalized}
Let $a_0\coloneqq x$. For an arbitrary depth-$p$ QAOA circuit,
\begin{equation}
\bra{\psi_p}\HC\ket{\psi_p}=\sum_{x\in F}C(x)\,q(x),
\qquad
q(x)\coloneqq|\braket{x|\psi_p}|^2=\frac{|A_x|^2}{|F|},
\qquad
A_x\coloneqq\sum_{a_1,\dots,a_p\in F}\ \prod_{j=1}^{p}e^{-i\gamma_jC(a_j)}K_j(a_{j-1},a_j).
\label{eq:path-sum}
\end{equation}
\end{lemma}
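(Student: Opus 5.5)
The plan is to expand $\ket{\psi_p}$ in the computational basis by inserting resolutions of the identity $\sum_{a\in F}\ket a\bra a$ between consecutive factors. The expectation-value identity then follows directly from the fact that $\HC$ is diagonal.

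First I fix the operator ordering. Following \eqref{eq:qaoa-expectation}, the state is
\[
\ket{\psi_p}=e^{-i\beta_1\HM}e^{-i\gamma_1\HC}\,e^{-i\beta_2\HM}e^{-i\gamma_2\HC}\cdots e^{-i\beta_p\HM}e^{-i\gamma_p\HC}\ket{\psi_0},
\]
so the layer with index $p$ acts first on $\ket{\psi_0}$. This is the ordering under which the path variables $a_1,\dots,a_p$ appear in the order stated in \eqref{eq:path-sum}.

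Since $\HC$ is diagonal on $F$, we have $\bra{\psi_p}\HC\ket{\psi_p}=\sum_{x\in F}C(x)|\braket{x|\psi_p}|^2$. It therefore suffices to compute the amplitude $\braket{x|\psi_p}$. For this I insert $\sum_{a_j\in F}\ket{a_j}\bra{a_j}$ immediately to the right of each mixer factor $e^{-i\beta_j\HM}$, for $j=1,\dots,p$. Three facts then evaluate the pieces:
\begin{enumerate}[leftmargin=*]
\item $e^{-i\gamma_j\HC}\ket{a_j}=e^{-i\gamma_jC(a_j)}\ket{a_j}$, because $\HC$ is diagonal.
\item $\bra{a_{j-1}}e^{-i\beta_j\HM}\ket{a_j}=K_j(a_{j-1},a_j)$, with the convention $a_0=x$.
\item $\braket{a_p|\psi_0}=|F|^{-1/2}$.
\end{enumerate}
Together these give $\braket{x|\psi_p}=|F|^{-1/2}A_x$, and squaring yields $q(x)=|A_x|^2/|F|$.

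All sums are finite, so no convergence issues arise. The only real obstacle is bookkeeping: the product notation $\prod_{k=1}^p$ for the state must be read consistently with \eqref{eq:qaoa-expectation}, so that the cost phase $e^{-i\gamma_jC(a_j)}$ attaches to the index $a_j$ sitting to the right of $K_j$. If the opposite ordering were intended, the same argument goes through after relabeling $j\mapsto p+1-j$. In that case only the pairing of the $\gamma_j$ with the $\beta_j$ changes, not the structure of \eqref{eq:path-sum}. Finally, the symmetry $K_j(a,b)=K_j(b,a)$ makes the orientation of each kernel immaterial.
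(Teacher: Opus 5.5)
Your proposal is correct and follows essentially the same argument as the paper: use diagonality of $\HC$ to reduce to the amplitude $\braket{x|\psi_p}$, then insert computational-basis resolutions of the identity between the factors of $U_M(\beta_1)U_C(\gamma_1)\cdots U_M(\beta_p)U_C(\gamma_p)\ket{\psi_0}$ and use $\braket{a_p|\psi_0}=|F|^{-1/2}$. Your handling of the operator ordering matches the paper's. In your side remark about the reversed ordering, relabeling $j\mapsto p+1-j$ changes which angle pair is attached to which kernel index, not the pairing of $\gamma_j$ with $\beta_j$ within a layer; this has no effect on the proof.
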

\begin{proof}
$\HC$ is diagonal, so $\bra{\psi_p}\HC\ket{\psi_p}=\sum_xC(x)|\braket{x|\psi_p}|^2$. Insert the resolution of the identity $\sum_a\ket a\bra a$ between consecutive factors of $\ket{\psi_p}=U_M(\beta_1)U_C(\gamma_1)\cdots U_M(\beta_p)U_C(\gamma_p)\ket{\psi_0}$, use that $U_C$ is diagonal, and use $\braket{a_p|\psi_0}=|F|^{-1/2}$. This gives $\braket{x|\psi_p}=|F|^{-1/2}A_x$.
\end{proof}

\begin{lemma}[Pointwise output probability]
\label{lem:pointwise}
For every $x\in F$, $q(x)\le\Lambda/|F|$.
\end{lemma}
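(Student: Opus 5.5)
Using \Cref{lem:generalized}, it suffices to show $|A_x|^2\le\Lambda$, i.e.\ $|A_x|\le\prod_{j=1}^p S_j$, since $q(x)=|A_x|^2/|F|$. The plan is a triangle-inequality bound on the path sum followed by layer-by-layer summation. The cost phases $e^{-i\gamma_jC(a_j)}$ have unit modulus, so
\[
|A_x|\;\le\;\sum_{a_1,\dots,a_p\in F}\ \prod_{j=1}^{p}\bigl|K_j(a_{j-1},a_j)\bigr|,\qquad a_0=x .
\]
This is exactly the phase-blind step the surrounding discussion anticipates: all interference is discarded, and only the magnitudes of mixer amplitudes remain.

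Next, evaluate the right-hand side by summing out the path variables from the far end, $a_p$ first, then $a_{p-1}$, and so on. For fixed $a_{p-1}$,
\[
\sum_{a_p}|K_p(a_{p-1},a_p)|=\sum_{a_p}\bigl|\bra{a_{p-1}}e^{-i\beta_p\HM}\ket{a_p}\bigr| .
\]
This is the $\ell_1$-norm of a row of $e^{-i\beta_p\HM}$. By the symmetry $K_p(a,b)=K_p(b,a)$ noted above, it equals the $\ell_1$-norm of the column $e^{-i\beta_p\HM}\ket{a_{p-1}}$, which is at most $S_p$ uniformly in $a_{p-1}$. Pulling out $S_p$ leaves a sum of the same form over $p-1$ layers. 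Induction on the number of layers then gives
\[
|A_x|\le\prod_{j=1}^p S_j .
\]
Equivalently, the bound is $\bigl\|\,|K_1|\cdots|K_p|\,\mathbf{1}\bigr\|_\infty\le\prod_j\|\,|K_j|\,\|_{\infty\to\infty}$, where $|K_j|$ denotes the entrywise modulus, and each induced $\infty$-norm is the maximum row $\ell_1$-sum, which equals $S_j$.

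Squaring and dividing by $|F|$ yields $q(x)\le\Lambda/|F|$. No step is a genuine obstacle. The only points needing care are the orientation of the path indices, which determines whether row or column sums appear, and the fact that symmetry of $\HM$ makes this distinction immaterial. One should also note that the bound is uniform in $x$ and in all $\gamma_j$, since only $|e^{-i\gamma_jC}|=1$ was used.
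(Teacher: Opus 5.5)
Your proposal is correct and follows essentially the same route as the paper. Both proofs apply the triangle inequality to the path sum using $|e^{-i\gamma_jC(a_j)}|=1$, then bound each nested layer sum $\sum_{a_j}|K_j(a_{j-1},a_j)|=\|e^{-i\beta_j\HM}\ket{a_{j-1}}\|_1\le S_j$ (using symmetry of $e^{-i\beta_j\HM}$) to obtain $|A_x|\le\prod_j S_j$, and hence $q(x)\le\Lambda/|F|$.
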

\begin{proof}
Since $|e^{-i\gamma_jC(a_j)}|=1$, the triangle inequality applied to
\eqref{eq:path-sum} gives
\[
|A_x|
\;\le\;
\sum_{a_1,\dots,a_p\in F}\ \prod_{j=1}^{p}|K_j(a_{j-1},a_j)|
\;=\;
\sum_{a_1}|K_1(x,a_1)|\sum_{a_2}|K_2(a_1,a_2)|\cdots\sum_{a_p}|K_p(a_{p-1},a_p)| .
\]
For any fixed $a_{j-1}$,
\[
\sum_{a_j}|K_j(a_{j-1},a_j)|
=\bigl\|e^{-i\beta_j\HM}\ket{a_{j-1}}\bigr\|_1
\le S_j .
\]
{Thus, the nested sum}
is at most $\prod_{j=1}^{p}S_j$. Hence $|A_x|^2\le\prod_{j=1}^pS_j^2=\Lambda$, and
\[
q(x)=\frac{|A_x|^2}{|F|}\le\frac{\Lambda}{|F|}. \qedhere
\]
\end{proof}

\begin{theorem}
\label{thm:bathtub}
Let $C$ be real-valued with sorted values $C_{\max} = C_{(1)}\ge C_{(2)}\ge\dots\ge C_{(|F|)}$, and put $\kappa\coloneqq|F|/\Lambda$. Then
\[
\bra{\psi_p}\HC\ket{\psi_p}\;\le\;B(\Lambda)\coloneqq
\begin{cases}
\dfrac{\Lambda}{|F|}\Bigl(\displaystyle\sum_{i\le\lfloor\kappa\rfloor}C_{(i)}+(\kappa-\lfloor\kappa\rfloor)\,C_{(\lfloor\kappa\rfloor+1)}\Bigr), & \Lambda<|F|,\\[3mm]
\Cmax, & \Lambda\ge|F| .
\end{cases}
\]
That is, $B(\Lambda)$ is the average cost of the best $1/\Lambda$ fraction of strings (fractional at the boundary). Moreover $B(\Lambda)\le\Cmax$, $B$ is non-decreasing in $\Lambda$, $B(1)=\Cavg$, and if $C\ge0$ then $B(\Lambda)\le\Lambda\,\Cavg$. Equivalently, the min-entropy of the output distribution satisfies $H_{\min}(q)\ge\log|F|-\log\Lambda$.
\end{theorem}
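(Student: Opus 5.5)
The plan is to reduce the theorem to a linear program over the output distribution and then solve that program with a greedy (bathtub) argument. By Lemma~\ref{lem:generalized}, $\bra{\psi_p}\HC\ket{\psi_p}=\sum_x C(x)q(x)$, where $q$ is a probability distribution on $F$. By Lemma~\ref{lem:pointwise}, $0\le q(x)\le \Lambda/|F|=1/\kappa$ for every $x$. The objective is therefore at most the value of the LP
\[
\max\Big\{\textstyle\sum_x C(x)w(x)\;:\;\sum_x w(x)=1,\ 0\le w(x)\le 1/\kappa\Big\}.
\]
If $\Lambda\ge|F|$, then $1/\kappa\ge1$, so the cap is vacuous. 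The LP value is then $\Cmax$, obtained by putting all mass on a maximizer; in any case $\sum_x C(x)q(x)\le\Cmax$ trivially.

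For $\Lambda<|F|$, I would show that the greedy filling is optimal. Set $w^*=1/\kappa$ on the $\lfloor\kappa\rfloor$ largest values, put weight $(\kappa-\lfloor\kappa\rfloor)/\kappa$ on the next value, and $0$ elsewhere. This is feasible, and its value is exactly $B(\Lambda)$. To prove optimality, take the threshold $t=C_{(\lfloor\kappa\rfloor+1)}$. For any feasible $w$,
\[
\sum_x C(x)\bigl(w^*(x)-w(x)\bigr)=\sum_x \bigl(C(x)-t\bigr)\bigl(w^*(x)-w(x)\bigr)\ge0,
\]
using $\sum_x w^*=\sum_x w$. The sum is nonnegative term by term: where $C(x)>t$ we have $w^*(x)=1/\kappa\ge w(x)$, and where $C(x)<t$ we have $w^*(x)=0\le w(x)$. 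Ties at $t$ contribute zero. The one point needing care is that ties in sorted values make the index labels ambiguous, so I would phrase the argument via this threshold, not via indices.

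The remaining properties follow from this LP picture.
\begin{itemize}
\item $B(\Lambda)\le\Cmax$, since $B$ is a weighted average of costs.
\item $B$ is non-decreasing in $\Lambda$: increasing $\Lambda$ raises the cap $1/\kappa$, which enlarges the feasible set.
\item $B(1)=\Cavg$: at $\Lambda=1$ the cap is $1/|F|$, so the uniform distribution is the only feasible point.
\item If $C\ge0$, then $B(\Lambda)=\sum_x C(x)w^*(x)\le(\Lambda/|F|)\sum_x C(x)=\Lambda\,\Cavg$, using $w^*(x)\le\Lambda/|F|$ termwise. When $\Lambda\ge|F|$, the same bound holds because $\Cmax\le\sum_x C(x)\le \Lambda\Cavg$.
\item The min-entropy statement is a restatement of Lemma~\ref{lem:pointwise}: $H_{\min}(q)=-\log\max_x q(x)\ge\log|F|-\log\Lambda$.
\end{itemize}

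I expect no real obstacle. The main care is in the tie and fractional-boundary bookkeeping for the optimality of the greedy solution, and in checking that the case $\kappa$ an integer (where $C_{(\lfloor\kappa\rfloor+1)}$ gets weight zero, or may not exist when $\kappa=|F|$) is consistent with the formula.
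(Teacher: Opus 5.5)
Your proof is correct and follows the paper's route: relax to the linear program over distributions with pointwise cap $\Lambda/|F|$ from Lemma~\ref{lem:pointwise}, then solve it by greedy (bathtub) filling. Your threshold argument with $t=C_{(\lfloor\kappa\rfloor+1)}$ proves the optimality of the greedy solution, which the paper only asserts (``probability should be assigned to the highest-cost strings first''), and your explicit checks of monotonicity, $B(1)=\Cavg$, and the $C\ge0$ bound in both regimes supply details the paper leaves implicit.
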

\begin{proof}
By Lemma~\ref{lem:pointwise} and normalisation, $q$ is feasible for the linear program

$$
\max_q\ \sum_xC(x)q(x)\quad\text{s.t.}\quad0\le q(x)\le\frac{\Lambda}{|F|},\quad\sum_xq(x)=1 .
$$

If $\Lambda\ge|F|$ the pointwise upper bound is at least $1$, so it imposes no restriction and the value is $\Cmax$. Otherwise, let

$$
\kappa=\frac{|F|}{\Lambda}
\qquad\text{and}\qquad
k=\lfloor\kappa\rfloor.
$$

To maximize $\sum_xC(x)q(x)$, probability should be assigned to the highest-cost strings first. Thus the first $k$ strings in the ordering

$$
C_{(1)}\ge C_{(2)}\ge\cdots\ge C_{(|F|)}
$$

receive probability $\Lambda/|F|$, and the $(k+1)$st string receives the remaining probability

$$
1-k\frac{\Lambda}{|F|}
=
(\kappa-k)\frac{\Lambda}{|F|}.
$$

Therefore

$$
\sum_xC(x)q(x)
\le
\frac{\Lambda}{|F|}
\left(
\sum_{i\le k}C_{(i)}
+
(\kappa-k)C_{(k+1)}
\right)
=
B(\Lambda).
$$

The same value is attained by placing $\Lambda/|F|$ on the top $k$ strings and the remaining probability on the next, so $B(\Lambda)$ is the exact LP maximum. If $C\ge0$, dropping $\sum q=1$ and keeping the point-wise upper bound gives $\sum_xC(x)q(x)\le\Lambda\Cavg$, leading to the following corollary.
\end{proof}

\begin{corollary}[Layer-wise spreading bound]
\label{upper-bound}
If $C\ge0$, then
\[
\bra{\psi_p}\HC\ket{\psi_p}\;\le\;\Cavg\prod_{j=1}^{p}S_j^2 .
\]
\end{corollary}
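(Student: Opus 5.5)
The plan is to read Corollary~\ref{upper-bound} off directly from Lemma~\ref{lem:pointwise}, via the relaxed linear program in the proof of Theorem~\ref{thm:bathtub}. Only the pointwise constraint is kept; the normalisation $\sum_x q(x)=1$ is dropped. Nonnegativity of the cost, guaranteed by the standing shift $C(x)\mapsto C(x)-C_{\min}$ of Part~I, is what makes dropping normalisation harmless.

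First, Lemma~\ref{lem:generalized} gives
\[
\bra{\psi_p}\HC\ket{\psi_p}=\sum_{x\in F}C(x)\,q(x).
\]
Second, Lemma~\ref{lem:pointwise} gives $q(x)\le\Lambda/|F|$ for every $x\in F$. Since $C(x)\ge0$, each summand satisfies $C(x)q(x)\le C(x)\Lambda/|F|$. Summing over $x$ then yields
\[
\bra{\psi_p}\HC\ket{\psi_p}\le\frac{\Lambda}{|F|}\sum_{x\in F}C(x)=\Lambda\,\Cavg,
\]
using $\Cavg=\bar C=|F|^{-1}\sum_x C(x)$. Substituting $\Lambda=\prod_{j=1}^p S_j^2$ gives the stated bound. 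Alternatively, one may cite the final clause of Theorem~\ref{thm:bathtub}, namely $B(\Lambda)\le\Lambda\Cavg$ for $C\ge0$, and chain it with $\bra{\psi_p}\HC\ket{\psi_p}\le B(\Lambda)$.

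There is no real obstacle. The one point deserving care is the sign hypothesis. The termwise inequality $C(x)q(x)\le C(x)\Lambda/|F|$ fails if some $C(x)<0$, so the corollary must be read for the shifted, nonnegative cost. I will also remark that the bound is only informative when $\Lambda\Cavg<\Cmax$, since otherwise the trivial bound $\Cmax$ from Theorem~\ref{thm:bathtub} is already stronger.
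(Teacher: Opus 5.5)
Your proof is correct and is essentially the paper's argument. The paper derives the corollary from Theorem~\ref{thm:bathtub} via $B(\Lambda)\le\Lambda\Cavg$, and justifies that inequality exactly as you do: it keeps the pointwise bound $q(x)\le\Lambda/|F|$ of Lemma~\ref{lem:pointwise}, drops the normalisation constraint, and uses $C\ge0$ to sum termwise.
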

\begin{proof}
Immediate from Theorem~\ref{thm:bathtub} and $B(\Lambda)\le\Lambda\Cavg$.
\end{proof}

\subsection{Necessary Spreading for a Target Approximation Ratio}
\label{sec:necessary-spreading}

Read in the opposite direction, Theorem~\ref{thm:bathtub} is a \emph{speed limit}: it lower-bounds the spreading, and hence the depth, any circuit needs to reach a given objective value. For $0<r\le1$ let
\[
\Lambda_*(r)\coloneqq\inf\bigl\{\Lambda\ge1:\ B(\Lambda)\ge r\,\Cmax\bigr\},
\]
which depends on the cost distribution only; $\Lambda_*(r)=1$ whenever $r\,\Cmax\le\Cavg$.

\begin{corollary}[Mixer speed limit]
\label{cor:necessary}
If a depth-$p$ circuit attains $\bra{\psi_p}\HC\ket{\psi_p}\ge r\,\Cmax$, then $\prod_jS_j^2\ge\Lambda_*(r)$. If moreover $S_j\le S_{\max}$ for all $j$ with $S_{\max}>1$, then
\[
p\;\ge\;\frac{\ln\Lambda_*(r)}{2\ln S_{\max}} .
\]
\end{corollary}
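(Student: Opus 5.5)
The plan is to combine Theorem~\ref{thm:bathtub} with the monotonicity of $B$ and then take logarithms. For the first claim, set $\Lambda\coloneqq\prod_j S_j^2\ge1$. By Theorem~\ref{thm:bathtub}, any depth-$p$ circuit with these layer spreadings satisfies $r\,\Cmax\le\bra{\psi_p}\HC\ket{\psi_p}\le B(\Lambda)$. Hence $\Lambda$ lies in the set $\{\Lambda'\ge1: B(\Lambda')\ge r\,\Cmax\}$, and so $\Lambda\ge\inf$ of that set, which is $\Lambda_*(r)$ by definition.

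This step is immediate from the definition of the infimum. It does not even need monotonicity of $B$, only that $\Lambda$ itself belongs to the set. I will note for interpretation that, because $B$ is non-decreasing, the set is an up-ray $[\Lambda_*(r),\infty)$ or $(\Lambda_*(r),\infty)$. The set is nonempty, since $B(\Lambda)=\Cmax\ge r\Cmax$ for $\Lambda\ge|F|$, so $\Lambda_*(r)\le|F|$ is finite.

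For the depth bound, I will use $S_j\le S_{\max}$ to get $\Lambda_*(r)\le\prod_j S_j^2\le S_{\max}^{2p}$. Taking natural logarithms (both sides are $\ge1$, so the logs are $\ge0$) gives $\ln\Lambda_*(r)\le 2p\ln S_{\max}$. Dividing by $2\ln S_{\max}>0$, which is valid since $S_{\max}>1$, yields $p\ge \ln\Lambda_*(r)/(2\ln S_{\max})$.

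There is no real obstacle here. The only point needing care is the sign and finiteness bookkeeping, namely that $\Lambda_*(r)$ is well defined and at least $1$, and that $\ln S_{\max}>0$. One should also observe that the shifted-cost convention $C\ge0$ is not needed: Theorem~\ref{thm:bathtub} holds for real $C$. It matters only if $r\Cmax$ is interpreted relative to the shifted cost, which the statement inherits from the section's convention.
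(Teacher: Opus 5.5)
Your proposal is correct and matches the paper's proof: you chain $r\,\Cmax\le\bra{\psi_p}\HC\ket{\psi_p}\le B(\Lambda)$ from Theorem~\ref{thm:bathtub} to get $\Lambda\ge\Lambda_*(r)$, then use $\Lambda\le S_{\max}^{2p}$ and take logarithms. You are also right that monotonicity of $B$, which the paper invokes, is not needed for the first claim, since $\Lambda$ belonging to the defining set of $\Lambda_*(r)$ already bounds it below by the infimum.
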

\begin{proof}
$r\Cmax\le\bra{\psi_p}\HC\ket{\psi_p}\le B(\Lambda)$ with $\Lambda=\prod_jS_j^2$, and $B$ is non-decreasing, so $\Lambda\ge\Lambda_*(r)$. The second claim follows from $\Lambda\le S_{\max}^{2p}$.
\end{proof}

{Corollary~\ref{cor:necessary} is a necessary condition only: if the spreading budget lies below the threshold, the target approximation ratio cannot be achieved; exceeding the threshold does not by itself guarantee achievability.}

\section{Vertex-Transitive Mixers}
\label{sec:vt}

For vertex-transitive mixer graphs all rows of $e^{-i\beta_j\HM}$ have the same $\ell_1$-norm, so $S_j(\beta_j)=\|e^{-i\beta_j\HM}\ket{v_0}\|_1$ for any $v_0$ and $\Lambda$ is explicit. Theorem~\ref{thm:bathtub} itself needs no transitivity; the assumption only removes the maximum over rows. Vertex-transitive mixers include the standard choices of the literature~\cite{farhi2014quantum,bartschi2020grover,rieffel2020xy}. We evaluate the bound on two of them and then analyse when the $\ell_1$ estimate could be saturated.

\subsection{Case 1: Hypercube Mixer}
\label{sec:case1_hypercube}

For $\HM=\sum_{k=1}^nX_k$ on $F=\{0,1\}^n$, $e^{-i\beta X}\ket0=\cos\beta\ket0-i\sin\beta\ket1$ gives
\[
S_j(\beta_j)=\bigl(|\cos\beta_j|+|\sin\beta_j|\bigr)^n,\qquad S_j^2=\bigl(1+|\sin2\beta_j|\bigr)^n .
\]
At $\beta_j=\pi/4$, $S_j^2=2^n=|F|$, so $\Lambda=|F|^p$ and Theorem~\ref{thm:bathtub} returns $\Cmax$: the bound is uninformative at that angle already for $p=1$ (equivalently, $\Cavg|F|=\sum_xC(x)\ge\Cmax$ when $C\ge0$, so Corollary~\ref{upper-bound} is vacuous too). The bound is informative exactly when
\begin{equation}
\Lambda<|F|\iff\prod_{j=1}^{p}\bigl(1+|\sin2\beta_j|\bigr)<2,
\label{eq:hypercube-window}
\end{equation}
a condition independent of $n$. For equal angles this reads $|\sin2\beta|<2^{1/p}-1\approx(\ln2)/p$. The $\ell_1$ bound is therefore informative on the hypercube only in a small-$\beta$ regime. For example, at $\beta = \pi/4$, the bound would be vacuous.

\subsection{Case 2: Complete-Graph (Grover) Mixer}
\label{sec:case2_complete_graph}

For $\HM=|F|\ket s\bra s$ (the complete graph up to a global phase, with $\ket s=|F|^{-1/2}\sum_x\ket x$),
\[
e^{-i\beta_j\HM}\ket l=\ket l+\frac{z_j}{|F|}\sum_{x\in F}\ket x,\qquad z_j=e^{-i\beta_j|F|}-1,
\]
so
\[
S_j=\Bigl|1+\tfrac{z_j}{|F|}\Bigr|+(|F|-1)\tfrac{|z_j|}{|F|}\le1+|z_j|=1+2\Bigl|\sin\tfrac{\beta_j|F|}{2}\Bigr|\le3 .
\]
Hence $\Lambda\le9^p$, \emph{independent of $|F|$}. Unlike the hypercube, the spreading budget grows only geometrically in $p$.

\subsubsection{Unstructured Search using Grover Mixer}
\label{sec:caseA_search}

Let $T\subseteq F$ be a set of $t$ marked states, and consider the cost function

$$
C(x)=
\begin{cases}
\Cmax,&x\in T,\\
0,&x\notin T.
\end{cases}
$$

The bathtub bound then simplifies exactly to

$$
B(\Lambda)
=
\Cmax\min\left\{1,\frac{t\Lambda}{|F|}\right\}.
$$

Combining this with $\Lambda\le9^p$ gives

$$
\bra{\psi_p}\HC\ket{\psi_p}
\le
\Cmax\min\left\{1,\frac{t\,9^p}{|F|}\right\}.
$$

Consequently, any circuit achieving approximation ratio $r$ must satisfy

$$
9^p\ge\frac{r|F|}{t},
$$

and hence, whenever $r|F|/t>1$,

$$
p\ge
\frac{\ln(r|F|/t)}{2\ln3}.
$$

For a single marked state, $t=1$, and an $n$-qubit search space has $|F|=2^n$. Thus, for fixed $r>0$,

$$
p\ge
\frac{n\ln2+\ln r}{2\ln3}
=
\Omega(n).
$$

This is only a necessary bound to achieve a given target approximation ratio.

\part{Expanding the QAOA Lightcone via Mixer Augmentation}
\label{part:augmentations}
Part~I studied mixer structure in configuration space, asking whether
sufficient amplitude transport is available to support concentration on high-cost states. Part~II turns to a distinct finite-depth limitation: physical locality. At fixed depth, only a bounded portion of the problem graph can causally influence a local objective term. Transport and causal reach therefore capture different aspects of mixer structure, and neither alone determines QAOA performance.
 A circuit may have ample transport while still being too local to access relevant problem structure, and enlarged causal reach need not be used effectively without favorable operator algebra and interference. We therefore study problem-graph-aligned augmentations that enlarge the lightcone at modest parameter cost and use the $X{+}XX$ versus $X{+}YY$ comparison to test how much of the observed gain can be attributed to reach alone. We do not compute $\Lambda$ for the augmented mixers here; moreover, the perturbative graph-adjacency analysis of Part~I does not directly cover the ordered $X{+}YY$ circuit.
\section{Lightcone expansion via 2-Local Mixer Augmentations}
\label{sec:augmentations}
We consider the MaxCut problem on an unweighted graph $G=(V,E)$. The cost Hamiltonian is
\[
\HC = \sum_{(u,v)\in E} C_{uv}, \qquad C_{uv} = \frac{1}{2}(I - Z_u Z_v),
\]
so that $C_{uv}\ket{z} = 1$ if the edge $(u,v)$ is cut and $0$ otherwise. The standard mixer is the transverse field
\[
\HMk{0} = \sum_{i\in V} X_i,
\]
and we choose the initial state as the equal superposition
\[
\ket{\psi_0} = \ket{+}^{\otimes n} = \frac{1}{\sqrt{2^n}}\sum_{z\in\{0,1\}^n}\ket{z},
\]

The expected number of cuts on edge $(u,v)$ is then the objective value (which we maximize)
\[ \bra{\psi_p} H_C \ket{\psi_p} = \sum_{(u,v)\in E}
\langle C_{uv} \rangle_p
= \sum_{(u,v)\in E} \bra{\psi_p} C_{uv} \ket{\psi_p}.
\]

\begin{definition}[$k$-local mixer augmentation]
\label{def:augmentation}
Let $G=(V,E)$ be the problem graph carrying $\HC = \sum_{(i,j)\in E} C_{ij}$. For an integer
$k\ge1$, define the \emph{$k$-augmented edge set}
\[
\Eaug{k} \;\coloneqq\; \bigl\{\{i,j\}\subseteq V : i\neq j,\ \dist_G(i,j) \le k\bigr\},
\]
For a choice of single-qubit generator $P\in\{X,Y\}$, the $k$-augmented mixer layer uses
\[
\beta_\ell\sum_{i\in V} X_i \;+\; \beta'_\ell\sum_{\{i,j\}\in \Eaug{k}} P_iP_j,
\]
where $P_iP_j$ denotes $X_iX_j$ or $Y_iY_j$. For $P=X$ all terms commute and $e^{-i(\cdot)}$ of this sum factorizes exactly. For $P=Y$, $[X_j,Y_iY_j]\neq0$, so the circuits of \cref{sec:mixer-circuits} implement the ordered product $\prod e^{-i\beta'_\ell Y_iY_j}\prod e^{-i\beta_\ell X_v}$, not $e^{-i(\cdot)}$ of the sum. Lemma~\ref{lem:lightcone-growth} holds for both. We write
$\HMk{0}=\sum_i X_i$ for the standard transverse-field mixer, and refer to
the two resulting augmented mixers as $X{+}XX$ ($P=X$) and $X{+}YY$
($P=Y$). The unitary implementation of both is discussed in
\cref{sec:mixer-circuits}.\end{definition}

\begin{remark}[The case $k=1$ is our focus]
For $k=1$, $\Eaug{1}=E$: the augmentation edge set equals exactly the problem graph's edge set.
{Every $ZZ$ cost edge is paired with a co-located $XX$ or $YY$ mixer interaction. We retain the general parameter $k$ because it allows the lightcone-growth statement in \Cref{lem:lightcone-growth} to be expressed uniformly and clarifies why $k=1$ is the smallest nontrivial augmentation that can enlarge the causal neighborhood.} Larger $k$ are conceptually
natural but practically problematic, as \Cref{sec:k-caveat} explains.
\end{remark}
{The use of additional two-local Hamiltonians to modify the interaction geometry was studied in \cite{kumar2026dealing}, including finite-depth support bounds. Here we specialize to localized, problem-graph-aligned augmentations in order to obtain size-invariant reductions for specific graph families; the practical motivation for the $k=1$ choice is discussed in \cref{sec:k-caveat}.}
\subsection{Lightcone growth}
\label{sec:lightcone}

\begin{lemma}[Depth-$p$ lightcone of standard QAOA]
\label{lem:lightcone}
{Let $G=(V,E)$ be the problem graph and let $\ket{\psi_p}$ be the QAOA state defined above. For any edge $(u,v)\in E$, the expectation value $\langle C_{uv} \rangle_p$ depends only on the qubits whose graph distance from $u$ or $v$ is at most $p$. Equivalently, with}
\[
{U_p = \prod_{\ell=1}^{p} e^{-i\beta_\ell \HMk{0}} e^{-i\gamma_\ell \HC},}
\]
{the Heisenberg-evolved operator $U_p^\dagger C_{uv}U_p$ acts trivially on all qubits outside}
\[
{L_p(u,v) = \bigl\{ w \in V : \operatorname{dist}(w,u) \le p \;\text{or}\; \operatorname{dist}(w,v) \le p \bigr\}.}
\]
\end{lemma}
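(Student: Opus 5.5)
We prove the statement by induction on the number of layers in the Heisenberg picture, peeling the circuit from the innermost layer outward. Write $U_p = V_p\cdots V_1$ with $V_\ell = e^{-i\beta_\ell \HMk{0}} e^{-i\gamma_\ell \HC}$, ordered so that the layer acting first on $\ket{\psi_0}$ is conjugated last. Define $O_0 = C_{uv}$ and $O_m = V^\dagger O_{m-1} V$ for successive layers. The claim to establish is that after $m$ layers the support of $O_m$ is contained in $L_m(u,v)$. Here $L_0(u,v)=\{u,v\}$ and $L_m$ is the set of vertices within distance $m$ of $u$ or $v$.

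The key step is a support-growth rule for a single layer, split into its two factors.

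\begin{enumerate}
\item The mixer factor is a product of single-qubit rotations $\prod_i e^{-i\beta X_i}$. Conjugating $O$ by it leaves the support unchanged: every $e^{\pm i\beta X_i}$ with $i\notin\operatorname{supp}(O)$ commutes with $O$ and cancels against its inverse.
\item The cost factor is $e^{-i\gamma\HC}=\prod_{(a,b)\in E} e^{-i\gamma C_{ab}}$, a product of mutually commuting two-qubit gates.
 \begin{itemize}
 \item Gates with $\{a,b\}\cap\operatorname{supp}(O)=\emptyset$ commute with $O$ and cancel.
 \item The remaining gates touch the support. They can enlarge it only by adding vertices adjacent in $G$ to $\operatorname{supp}(O)$.
 \end{itemize}
 Hence $\operatorname{supp}(e^{i\gamma\HC}Oe^{-i\gamma\HC})\subseteq N[\operatorname{supp}(O)]$, the closed neighborhood of the support.
\end{enumerate}

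Combining the two, one full layer maps support $S$ to at most $N[S]$. Since $N[L_m(u,v)]=L_{m+1}(u,v)$, induction gives $\operatorname{supp}(U_p^\dagger C_{uv}U_p)\subseteq L_p(u,v)$.

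This bound ignores a refinement: conjugation by the diagonal cost layer alone does not grow a diagonal operator. A sharper count would note that the initial $C_{uv}$ commutes with the first cost factor applied. The simple bound already suffices for the lemma, so we do not use this.

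For the "equivalently" part, $\ket{\psi_0}=\ket{+}^{\otimes n}$ is a product state. The expectation $\bra{\psi_0}U_p^\dagger C_{uv}U_p\ket{\psi_0}$ therefore reduces to the expectation of an operator on $L_p(u,v)$ in the reduced product state $\ket{+}^{\otimes|L_p|}$. This depends only on the structure of $G$ restricted to those qubits, together with the gates acting within them.

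The main obstacle is stating the cancellation argument cleanly in the presence of many gates. We handle it by ordering the commuting cost gates so that those disjoint from the current support are applied outermost. Each such gate then conjugates away trivially, and the remaining nontrivial gates visibly act only on $N[\operatorname{supp}(O)]$. For the cost layer, commutativity of all the $C_{ab}$ justifies this reordering. For the mixer layer, the factors act on disjoint qubits, so the reordering is immediate.
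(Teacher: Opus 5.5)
Your proposal is correct and follows essentially the same route as the paper's proof. The single-qubit mixer conjugation does not enlarge the support, the mutually commuting cost gates disjoint from the current support cancel so that each layer grows the support by at most one graph hop, and induction on $p$ then gives $L_p(u,v)$. One small caveat concerns your unused aside: within each layer the mixer is the last gate applied, so in $U_p^\dagger C_{uv}U_p$ the operator $C_{uv}$ is first conjugated by a mixer factor rather than by a cost factor, and that ``refinement'' therefore does not actually arise here.
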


\begin{proof}
We reproduce the standard lightcone argument of Farhi et al.~\cite{farhi2014quantum} in our notation. For $p=1$, we have
\[
\bra{\psi_0}U_1^\dagger C_{uv} U_1\ket{\psi_0}
= \bra{+}^{\otimes n}e^{i\gamma_1 \HC} e^{i\beta_1 \HMk{0}}\, C_{uv}\, e^{-i\beta_1 \HMk{0}} e^{-i\gamma_1 \HC}\ket{+}^{\otimes n}.
\]
The factors in $e^{i\beta_1 \HMk{0}}$ that do not act on qubits $u$ or $v$ commute through $C_{uv}$ and cancel, so only $e^{i\beta_1(X_u+X_v)}$ remains. Next, the factors in $e^{i\gamma_1 \HC}$ that do not act on any edge incident to $u$ or $v$ also commute through and cancel; {the surviving terms involve precisely the edges incident to $u$ or $v$ and their endpoints. Thus, for $p=1$, the Heisenberg-evolved observable is supported on the ball of radius $1$ around $\{u,v\}$.}

{For general $p$, the single-qubit mixer does not enlarge support, whereas conjugation by the cost layer can enlarge the support by at most one graph hop. Repeating the argument layer by layer shows that $U_p^\dagger C_{uv}U_p$ is supported within graph distance $p$ of $\{u,v\}$, and therefore $\langle C_{uv}\rangle_p$ is independent of qubits outside $L_p(u,v)$.}
\end{proof} 
\begin{corollary}[{Reduction to representative lightcone classes}]
\label{cor:isoclasses}

{Consider either the standard mixer or one of the graph-defined augmented mixer circuits of \cref{sec:mixer-circuits}, with parameters shared uniformly over vertices and augmentation edges. For any fixed parameter vector, the value $\langle C_{uv}\rangle_p$ depends only on the isomorphism type of the edge-marked circuit lightcone around $(u,v)$, including the local cost and mixer interactions. Hence edges with isomorphic marked lightcones contribute equal expectation values, and}
\[
{F(\boldsymbol\theta)
=\bra{\psi_p}H_C\ket{\psi_p}
=\sum_{(u,v)\in E}\langle C_{uv}\rangle_p
=\sum_{c\in\mathcal{I}_p(G)}m_c\,\langle C\rangle_p^{(c)}(\boldsymbol\theta),}
\]
{where $\mathcal{I}_p(G)$ is the set of distinct edge-marked lightcone classes, $m_c$ is the multiplicity of class $c$, and $\langle C\rangle_p^{(c)}$ is evaluated on a representative neighborhood. For maximum degree $d$, a radius-$r(p)$ representative contains at most $O(d^{\,r(p)})$ vertices, with $r(p)=p$ for the standard mixer and $r(p)\le(k+1)p$ for the augmented construction.}
\end{corollary}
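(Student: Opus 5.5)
The proof splits into two parts. The first is a locality statement: the Heisenberg-evolved observable is supported on a bounded neighborhood. The second is an invariance statement: conjugating by a graph isomorphism that preserves the marked lightcone leaves the expectation value unchanged.

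\emph{Locality.} Write the circuit as $U_p=\prod_{\ell}M_\ell\, e^{-i\gamma_\ell\HC}$, where $M_\ell$ is the mixer layer. For the standard mixer $M_\ell=e^{-i\beta_\ell\HMk{0}}$. For the augmented circuits $M_\ell$ is the ordered product of $e^{-i\beta'_\ell P_iP_j}$ over $\{i,j\}\in\Eaug{k}$ and $e^{-i\beta_\ell X_v}$.

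Peel off one layer at a time in the Heisenberg picture. Let $O$ be supported on a set $S$ and let $g=e^{-i\theta h}$ be any gate. If the support of $h$ is disjoint from $S$, then $g^\dagger O g=O$. Otherwise $g^\dagger O g$ is supported on $S\cup\operatorname{supp}(h)$.

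\begin{itemize}
\item A cost layer enlarges the support by graph distance at most $1$, because every $ZZ$ term sits on a problem edge.
\item An $X$ layer does not enlarge it.
\item A layer of $P_iP_j$ gates on $\Eaug{k}$ enlarges it by at most $k$. For $P=X$ this holds because the gates commute, so the whole layer acts by simultaneous conjugation. For the ordered $YY$ product I would argue that it still holds, but not by absorbing one gate after another: sequentially absorbing overlapping gates could chain distances. The key observation is that the $YY$ gates also commute mutually, since $Y_iY_j$ and $Y_kY_l$ commute. So the whole $YY$ block is a product of commuting terms, each of which either commutes with $O$ or touches $S$. Only those touching $S$ survive, and their union lies within distance $k$ of $S$.
\end{itemize}

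This gives radius growth of at most $k+1$ per layer, hence $r(p)\le(k+1)p$, and $r(p)=p$ in the standard case; this is Lemma~\ref{lem:lightcone} and its augmented analogue, Lemma~\ref{lem:lightcone-growth}. The main care point is exactly this commutation of the $YY$ block with the ordering relative to the $X$ gates. The $X$ gates form a separate factor, so the ordered product of the two blocks causes no extra growth.

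\emph{Invariance.} Let $\Lambda_{uv}$ be the radius-$r(p)$ neighborhood of $(u,v)$, marked with the cost and mixer interactions it contains. Since $U_p^\dagger C_{uv}U_p=U_{\Lambda}^\dagger C_{uv}U_{\Lambda}\otimes I$, where $U_\Lambda$ is the circuit restricted to the gates inside the cone, and $\ket{\psi_0}$ is a product state, the expectation value equals $\bra{+}^{\otimes|\Lambda|}U_\Lambda^\dagger C_{uv}U_\Lambda\ket{+}^{\otimes|\Lambda|}$.

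Suppose $\phi$ is an isomorphism between the marked cones of $(u,v)$ and $(u',v')$ that sends $u\mapsto u'$ and $v\mapsto v'$ (possibly swapped, which is harmless since $C_{uv}$ is symmetric). Let $W_\phi$ be the qubit permutation it induces. Because parameters are shared uniformly over vertices and edges, $W_\phi$ maps each gate to the corresponding gate with the same angle. It also maps $C_{uv}$ to $C_{u'v'}$ and fixes $\ket{+}^{\otimes}$. For the ordered $YY$ circuit, the gate order within each block is irrelevant by the commutation just noted, so the ordering is preserved automatically. Hence the two expectation values agree.

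\emph{Summation and size bound.} Grouping the edges into isomorphism classes gives the sum $\sum_c m_c\langle C\rangle^{(c)}_p$. The vertex bound follows from the fact that a ball of radius $r$ in a graph of maximum degree $d$ has at most $1+d\sum_{t<r}(d-1)^t=O(d^{r})$ vertices, and the union of two such balls is within a factor of $2$ of that.
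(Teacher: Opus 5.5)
Your proposal is correct and takes essentially the approach the paper intends. The paper gives no separate proof: it treats the corollary as immediate from the support bounds of Lemma~\ref{lem:lightcone} and Lemma~\ref{lem:lightcone-growth} (including the same observation that the $YY$ terms mutually commute), then groups edges with isomorphic marked neighborhoods and counts ball sizes at bounded degree. Your restriction to $U_\Lambda$ and the qubit-permutation argument via $W_\phi$ spell out the invariance step that the paper leaves implicit.
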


{This reduction is useful when the class multiplicities and the optimal MaxCut value are known as functions of graph size: the depth-$p$ objective can then be evaluated from finite representative neighborhoods rather than by simulating the full instance. This locality principle underlies size-independent QAOA analyses for cycles and regular high-girth graphs~\cite{farhi2014quantum,wurtz2020maxcut}.}

\begin{lemma}[Lightcone radius under $k$-augmentation]
\label{lem:lightcone-growth}
Under a depth-$p$ QAOA circuit with cost Hamiltonian $\HC$ (supported on $E$) and $k$-augmented
mixer $\HMk{k}$ (Definition~\ref{def:augmentation}), the expectation value of any edge-local cost
term restricted to $(u,v)$ depends only on the induced subgraph within graph distance
\[
r(p) \;=\; (k+1)\,p
\]
of $\{u,v\}$ in $G$. For $k=0$ (the standard $X$-mixer), $r(p)=p$.
\end{lemma}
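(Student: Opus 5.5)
We argue in the Heisenberg picture, as in the proof of Lemma~\ref{lem:lightcone}. We track the support of the evolved observable layer by layer. Write $O_0\coloneqq C_{uv}$, supported on $\{u,v\}$. Let $O_\ell$ be the operator after conjugating by the last $\ell$ layers, read from the observable outward. The circuit is
\[
U_p=\prod_{\ell=1}^{p} W_\ell\, e^{-i\gamma_\ell \HC},
\]
where $W_\ell$ is the mixer unitary of layer $\ell$. For $P=X$ it is the exact exponential. For $P=Y$ it is the ordered product $\prod e^{-i\beta'_\ell Y_iY_j}\prod e^{-i\beta_\ell X_v}$. The key observation is that, in both cases, $W_\ell$ is a product of gates each supported on a single vertex or on a pair $\{i,j\}\in\Eaug{k}$. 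Each cost layer is likewise a product of commuting gates on pairs in $E$. The ordering issue for $X{+}YY$ therefore plays no role: we only need the gate supports, not commutation between the mixer terms.

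The basic tool is the following support-growth fact. Let $O$ be supported on $S$, and let $g$ be a unitary supported on $T$. If $T\cap S=\emptyset$, then $g^\dagger O g=O$. Otherwise $g^\dagger O g$ is supported on $S\cup T$.

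Apply this gate by gate to a layer that is a product of gates. Take the cost layer first. If $O$ is supported in the ball $N_r(\{u,v\})$ of graph radius $r$ about $\{u,v\}$, any cost gate on an edge $(i,j)\in E$ touching the current support adds at most vertices at distance $r+1$.

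One subtlety needs care here. Within a single layer, a later gate might touch a vertex added by an earlier gate of the same layer, which would let the support cascade. For the cost layer this is harmless: all $ZZ$ terms commute, so $e^{i\gamma\HC}Oe^{-i\gamma\HC}$ can be computed by first discarding every factor not touching $S$, exactly as in Lemma~\ref{lem:lightcone}. Growth is thus at most one hop.

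For the mixer layer, the single-qubit factors add nothing. Each two-local factor on $\{i,j\}$ with $\dist_G(i,j)\le k$ adds at most $k$ hops. For $P=X$ all $X_iX_j$ and $X_v$ terms commute, so the same discarding argument applies and gives growth of at most $k$.

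For the ordered $Y$ product, the $Y_iY_j$ gates still mutually commute. The $X_v$ factors sit in a separate sub-layer and are single-site, so they do not enlarge support. We therefore conjugate first by the $X$ sub-layer (no growth) and then by the commuting $YY$ sub-layer, which grows the support by at most $k$ via discarding.

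Hence one full layer maps support in $N_r$ to support in $N_{r+k+1}$. Induction on the number of layers gives that $U_p^\dagger C_{uv}U_p$ is supported in $N_{(k+1)p}(\{u,v\})$.

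Two conclusions then follow. First, the initial state is a product state $\ket{+}^{\otimes n}$, so the expectation depends only on qubits within this radius. Second, the gates that survive discarding are exactly those on vertices and pairs inside this ball. Their pattern is determined by the induced subgraph of $G$ on the ball, because $\Eaug{k}$ restricted there is determined by distances. The one caveat is that distance-$\le k$ pairs are measured in $G$, not in the induced subgraph. For the claimed dependence we therefore take the marked neighborhood including these $\Eaug{k}$ pairs, consistent with Corollary~\ref{cor:isoclasses}. Setting $k=0$, i.e.\ no two-local mixer terms, recovers $r(p)=p$ from Lemma~\ref{lem:lightcone}.

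The main obstacle is making the within-layer discarding argument rigorous. The goal is to rule out cascading growth inside one commuting sub-layer, and the route is to show that $e^{iA}Oe^{-iA}=e^{iA_S}Oe^{-iA_S}$ whenever $A=A_S+A_{\bar S}$ is a sum of commuting terms, with $A_{\bar S}$ collecting the terms disjoint from $S$ and commuting with $O$. Once this is in place, the rest is bookkeeping. The ordered $YY$ case needs the separate treatment of its two sub-layers described above.
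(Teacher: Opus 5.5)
Your proposal is correct and is essentially the paper's argument: at most one hop per commuting cost layer, at most $k$ hops per mixer layer because the $P_iP_j$ terms commute among themselves and the $X_v$ rotations preserve support, then induction on $p$; your discarding identity for commuting sub-layers spells out what the paper asserts in one line (for $\bigl(\prod e^{-i\beta'_\ell Y_iY_j}\bigr)\bigl(\prod e^{-i\beta_\ell X_v}\bigr)$ the Heisenberg conjugation actually hits the $YY$ sub-layer first, not the $X$ sub-layer, which is harmless). Your closing hedge for $k\ge2$ is unnecessary: at each mixer step the support lies in a ball $N_r$ with $r+k\le(k+1)p$, so every pair $\{i,j\}$ with $i\in N_r$ and $\dist_G(i,j)\le k$ has a shortest path inside $N_{r+k}$, hence the $\Eaug{k}$ pairs that actually act are determined by the induced subgraph on the radius-$(k+1)p$ ball, exactly as the lemma states.
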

\begin{proof}
{The cost sublayer consists of commuting two-qubit terms on edges of $G$ and can enlarge the support of a local Pauli operator by at most one graph hop. In the augmented mixer sublayer, the added $P_iP_j$ interactions act only on pairs with $\operatorname{dist}_G(i,j)\le k$; for a fixed generator type $P\in\{X,Y\}$ these augmentation terms commute with one another, and the accompanying single-qubit $X$ rotations do not enlarge support. Thus the mixer sublayer can enlarge the support by at most $k$ additional graph hops. One complete QAOA layer therefore enlarges the causal neighborhood by at most $k+1$, and induction on $p$ gives support within distance $(k+1)p$ of $\{u,v\}$. Setting $k=0$ recovers the standard radius bound $r(p)=p$.}
\end{proof}

\section{{Exact Local Evaluation via Representative Lightcones for $k=1$}}
\label{sec:transitivity}

\subsection{Setup}

{We study three structured graph families for which the $k=1$ augmentation preserves a finite set of edge-neighborhood isomorphism classes. By Corollary~\ref{cor:isoclasses}, the global objective at fixed depth can therefore be evaluated from a finite collection of representative lightcones, with a worst-case neighborhood size scaling as $O(d^{2p})$ for maximum degree $d$. The resulting reduction is exact at fixed parameters; the subsequent variational optimization over those parameters is numerical. This distinction allows us to obtain size-independent objective evaluations once the relevant lightcones no longer wrap around the finite graph.}

For each family we define the approximation ratio at depth $p$ as
\[
{\mathrm{AR}_p(G) \;=\; \frac{\displaystyle\max_{\boldsymbol\theta}\bra{\psi_p(\boldsymbol\theta)}H_C\ket{\psi_p(\boldsymbol\theta)}}{\mathrm{OPT}(G)},
\qquad \mathrm{OPT}(G) = \max_{z\in\{0,1\}^n} \sum_{(u,v)\in E} C_{uv}(z),}
\]
\paragraph{{Cycle.}} {For a cycle $C_n$ with $n\ge 2r(p)+2$, the lightcone around any edge does not wrap around the cycle and every edge has the same path-like representative neighborhood. Hence every edge contributes the same value $\langle C\rangle_p^{(\mathrm{path})}$, and}
\[
\max_{\gamma_1,\beta_1\dots \gamma_p,\beta_p}\bra{\psi_p}H_C\ket{\psi_p} = n\,\max_{\gamma_1,\beta_1\dots \gamma_p,\beta_p}\langle C\rangle_p^{(\text{path})},\]
\[
\qquad
\mathrm{AR}_p(C_n) = \frac{n\,\max_{\gamma_1,\beta_1\dots \gamma_p,\beta_p}\langle C\rangle_p^{(\text{path})}}{n} = \max_{\gamma_1,\beta_1\dots \gamma_p,\beta_p}\langle C\rangle_p^{(\text{path})}.
\]
{For odd $n$, $\mathrm{OPT}(C_n)=n-1$, so the same local calculation is multiplied by the finite-size factor $n/(n-1)$; this factor tends to $1$ as $n$ grows.}
\paragraph{{Circular ladder.}}
{The circular ladder has two edge orbits, rail and rung. Writing $n=|E|$, their multiplicities are $m_{\mathrm{rail}}=2n/3$ and $m_{\mathrm{rung}}=n/3$. When the ladder is bipartite, $\mathrm{OPT}=n$; the odd case carries the corresponding finite-size correction. Thus, Corollary~\ref{cor:isoclasses} gives}
\[
\max_{\gamma_1,\beta_1\dots \gamma_p,\beta_p}\bra{\psi_p}H_C\ket{\psi_p} = \max_{\gamma_1,\beta_1\dots \gamma_p,\beta_p}\big(\frac{2n}{3}\,\langle C\rangle_p^{(\text{rail})} + \frac{n}{3}\,\langle C\rangle_p^{(\text{rung})}\big),
\]
\[
\qquad
{\mathrm{AR}_p(CL) 
= \max_{\gamma_1,\beta_1,\dots,\gamma_p,\beta_p}\tfrac{1}{3}\Bigl(2\langle C\rangle_p^{(\text{rail})} + \langle C\rangle_p^{(\text{rung})}\Bigr),}
\]

For the circular ladder  with $L$ columns ($2L$ vertices, $3L$ edges; $m_{\text{rail}}=2L$, $m_{\text{rung}}=L$), the
$k$-augmented lightcone (radius $r(p)=(k+1)p$ in graph distance) does not
wrap around the ladder's cyclic column direction whenever
$L\ge 2(k+1)p+2$, {so the representative-neighborhood evaluation is size-independent once this condition holds.}

\paragraph{{High-girth bipartite $(d+1)$-regular graphs: $BRG_d$.}}
{Basso et al.~\cite{basso2021quantum} use the tree-like local structure of high-girth regular graphs in QAOA analysis. In our setting, if $\operatorname{girth}(G)>2(k+1)p+1$, the relevant edge neighborhood remains tree-like throughout the augmented lightcone. For a bipartite $(d+1)$-regular graph, the local edge structure is identical for all edges, giving}
\[
\mathrm{AR}_p(BRG_d) = \frac{|E|\,\langle C\rangle_p^{BRG_d}}{|E|} = \langle C\rangle_p^{BRG_d}.
\]

\subsection{{Best-Found Performance at Fixed Depth for $k=1$}}
\label{sec:results}
\subsubsection{Experimental Setup}
\label{sec:setup}

For each family and each edge orbit, Corollary~\ref{cor:isoclasses} lets us replace
the depth-$p$ expectation on the full (possibly infinite) graph with an
expectation on a finite number of representative edges of the light-cone. Concretely, for
a chosen representative edge we construct, via the BFS procedure underlying
\cref{lem:lightcone}, the induced subgraph reachable within light-cone
radius $2p$ (i.e.\ $k=1$, one cost-hop and one mixer-hop per layer), take
the cost and mixer Hamiltonians restricted to that subgraph, and initialize
the corresponding qubits in the uniform superposition.  We then apply
the $p$ QAOA layers by explicit state-vector simulation for small
light-cones, and by tensor-network contraction where the light-cone size
makes dense simulation impractical (this only arises for the tree, where
light-cone size grows as $O(d^{2p})$ rather than linearly in $p$).

Because the parameters $(\gamma_\ell,\beta_\ell,\dots)$ are shared across
the entire graph, we do not optimize each orbit independently. Instead we
form the weighted objective
\[
\mathcal{O}(\gamma,\beta,\dots) \;=\; \sum_{\text{orbit } o} w_o\,
\langle C\rangle_p^{(o)}(\gamma,\beta,\dots), \qquad
\textstyle\sum_o w_o = 1,
\]
with $w_o$ the edge-fraction of orbit $o$ (derived per family in
\cref{sec:transitivity}: $w=1$ for the cycle and $BRG_d$; $w_{\text{rail}}=
\tfrac23,\,w_{\text{rung}}=\tfrac13$ for the ladder), and maximize
$\mathcal{O}$ jointly over the shared parameter vector using a
gradient-based local optimizer (BFGS) with a 10 random
restarts to mitigate the non-convexity of the QAOA landscape. {Because this is a local numerical optimization, the reported values are best-found lower bounds on the globally optimized depth-$p$ approximation ratios rather than certified global optima.}
\subsubsection{Mixer Circuits}
\label{sec:mixer-circuits}

At every layer $\ell=1,\dots,p$, the cost unitary

$$
U_C(\gamma_\ell)
=
\prod_{(a,b)\in E} e^{-i\gamma_\ell C_{ab}}
$$

is followed by a mixer. We compare three mixer constructions at $k=1$, so that the augmentation edges coincide with the problem edges, $E_1=E$.

\paragraph{Standard ($X$).}
The standard transverse-field mixer is

$$
U_M^{X}(\beta_\ell)
=
\prod_{v\in V} e^{-i\beta_\ell X_v}.
$$

All single-qubit terms commute, so the order of this product is immaterial. The layer therefore has two variational parameters, $(\gamma_\ell,\beta_\ell)$.

\paragraph{Augmented $X{+}XX$.}
For the $XX$ augmentation, we use

$$
U_M^{X+XX}(\beta_\ell,\beta'_\ell)
=
\left(
\prod_{(u,v)\in E}
e^{-i\beta'_\ell X_uX_v}
\right)
\left(
\prod_{v\in V} e^{-i\beta_\ell X_v}
\right),
$$

\paragraph{Augmented $X{+}YY$.}
For the $YY$ augmentation, we use

$$
U_M^{X+YY}(\beta_\ell,\beta'_\ell)
=
\left(
\prod_{(u,v)\in E} e^{-i\beta'_\ell Y_uY_v}
\right)
\left(
\prod_{v\in V} e^{-i\beta_\ell X_v}
\right),
$$
For both, each layer has three variational parameters
$(\gamma_\ell,\beta_\ell,\beta'_\ell)$, one for the cost layer, one for
the global $X$ term, and one shared across all augmentation edges.

\paragraph{Why not $XX+YY$.} A natural further augmentation would combine
both terms, $X_uX_v+Y_uY_v$, on each edge. However, $X$ and $Y$ terms on a shared
qubit do not commute, so a mixer built from it summed over all
augmentation edges must
instead be Trotterized into vertex-disjoint color classes applied in some
fixed order (via, e.g., a greedy edge-coloring of the augmentation graph). Once an ordering is imposed, two edges with
isomorphic \emph{uncolored} neighborhoods can end up in different color
classes, or the same class but at different positions in the fixed
application order, and so can see different effective circuits even when
their local topology is identical. Recovering a clean isomorphism-class
reduction for an ordered mixer would require tracking edge-coloring
{position (which can vary across implementations) as part of the isomorphism type itself, rather than the graph alone, undermining}
the size-invariance guarantees that are the point of \Cref{sec:transitivity}.
We therefore restrict our augmentations to single commuting generator
types ($XX$ or $YY$), for which the
isomorphism-class reduction is exact.
\subsubsection{Results}
\label{sec:results-tables}

\begin{table}[h]
\centering
\begin{tabular}{@{}lcccccc@{}}
\toprule
$p$ & $\mathrm{AR}_p^{X}$ & $\mathrm{AR}_p^{X+XX}$ & $\mathrm{AR}_p^{X+YY}$
& params ($X$) & params (aug.) \\
\midrule
1 & \texttt{0.7500} & \texttt{0.7829} & \texttt{0.8247} & 2 & 3 \\
2 & \texttt{0.8333} & \texttt{0.8656} & \texttt{0.8987} & 4 & 6 \\
3 & \texttt{0.8750} & \texttt{0.9034} & \texttt{0.9277} & 6 & 9 \\
4 & \texttt{0.9000} & \texttt{0.9198} & \texttt{0.9424} & 8 & 12 \\
\bottomrule
\end{tabular}
\caption{Cycle $C_n$, $k=1$}
\label{tab:cycle-results}
\end{table}

\begin{table}[h]
\centering
\begin{tabular}{@{}lcccccc@{}}
\toprule
$p$ & $\mathrm{AR}_p^{X}$ & $\mathrm{AR}_p^{X+XX}$ & $\mathrm{AR}_p^{X+YY}$
& params ($X$) & params (aug.) \\
\midrule
1 & \texttt{0.6924} & \texttt{0.7129} & \texttt{0.7806} & 2 & 3 \\
2 & \texttt{0.7918} & \texttt{0.8118} & \texttt{0.8469} & 4 & 6 \\
\bottomrule
\end{tabular}
\caption{Circular ladder $CL_n$, $k=1$, joint rail/rung objective}
\label{tab:ladder-results}
\end{table}

\begin{table}[h]
\centering
\begin{tabular}{@{}lcccccc@{}}
\toprule
$d$ & $\mathrm{AR}_1^{X}$ & $\mathrm{AR}_1^{X+XX}$ & $\mathrm{AR}_1^{X+YY}$
& params ($X$) & params (aug.) \\
\midrule
2 & \texttt{0.6924} & \texttt{0.7050} & \texttt{0.7455} & 2 & 3 \\
3 & \texttt{0.6623} & \texttt{0.6695} & \texttt{0.7063} & 2 & 3 \\
4 & \texttt{0.6431} & \texttt{0.6479} & \texttt{0.6815} & 2 & 3 \\
5 & \texttt{0.6294} & \texttt{0.6329} & \texttt{0.6639} & 2 & 3 \\
\bottomrule
\end{tabular}
\caption{$BRG_d$, $p=1$}
\label{tab:tree-results}
\end{table}
Across all three graph families, both augmented mixers improve the best-found
approximation ratio over the standard $X$ mixer at equal depth $p$, and
$X{+}YY$ outperforms $X{+}XX$ in every setting we studied. On the cycle,
$X{+}XX$ gains $3.3$, $3.2$, $2.8$ and $2.0$ points for $p=1,\dots,4$, while
$X{+}YY$ gains $7.5$, $6.5$, $5.3$ and $4.2$ points. On the circular ladder,
the gains are $2.1$ and $2.0$ points ($XX$) and $8.8$ and $5.5$ points ($YY$)
at $p=1,2$. On the high-girth bipartite graphs at $p=1$, the gains shrink
with degree, from $1.3$ to $0.35$ points ($XX$) and from $5.3$ to $3.45$
points ($YY$) as $d$ goes from $2$ to $5$. {For the cycle and ladder data reported here, the equal-depth gain decreases as $p$ increases.}

Since the augmented circuit reaches radius $2p$ at depth $p$, the natural
comparison is against plain $X$ at depth $2p$, which has the same lightcone
radius and the same number of two-qubit layers but $p$ more single-qubit
layers and $p$ more parameters. Here the two augmentations separate sharply.
$X{+}YY$ at depth $p$ comes within $0.1$--$1.1$ points of $X$ at depth $2p$
(cycle: $0.8247$ vs.\ $0.8333$ at $p=1$, $0.8987$ vs.\ $0.9000$ at $p=2$;
ladder: $0.7806$ vs.\ $0.7918$ at $p=1$), whereas $X{+}XX$ remains
$3.4$ and $7.9$ points below on the cycle ($p=2$) and ladder ($p=1$),
respectively. Equal lightcone radius therefore does not imply equal
performance, and we do not claim that augmentation outperforms plain $X$ at
equal reach. The supported claim is a resource trade-off: $X{+}YY$ nearly
matches deeper plain-$X$ circuits with $3p$ rather than $4p$ parameters and
$p$ fewer single-qubit layers, assuming native two-qubit $YY$ gates.

{All reported values are the best found over random restarts. Any parameter setting gives a valid lower bound on the globally optimized objective of the corresponding ansatz; accordingly, numerical comparisons below are stated in terms of best-found values rather than certified optima.}

\subsubsection{Parameter Budget}
\label{sec:param-budget}

{The main comparison is deliberately parameter-efficient: augmentation receives exactly one additional shared angle per layer, giving two parameters per layer $(\gamma_\ell,\beta_\ell)$ for the plain $X$ ansatz and three $(\gamma_\ell,\beta_\ell,\beta'_\ell)$ for either augmented variant.}

{As an exploratory check that the observed gap is not explained solely by the extra variational degree of freedom, we also tested a plain-$X$ control with two independent $\beta$ angles per layer by splitting the qubits inside each representative lightcone into two groups. This control is deliberately not presented as a globally consistent ansatz: the grouping is defined locally within each representative lightcone. It therefore cannot provide a strict parameter-matched global baseline. Nevertheless, the additional local parameter produced negligible improvement over the single-$\beta$ baseline in these tests, suggesting that parameter count alone is unlikely to explain the full observed advantage of the augmented mixers. A globally defined parameter-matched control would provide a stronger test.}

\subsubsection{Additional Mixer Variants}
\label{sec:mixer-zoo}

Beyond the three architectures compared above, we evaluated a broader set
of mixer constructions as a robustness check on the choice of
augmentation. We
tested standalone edge mixers with no transverse-field term at all: $XX$,
$YY$, $ZZ$, and $AA(\theta)\coloneqq(\cos\theta X+\sin\theta Y)_i(\cos\theta
X+\sin\theta Y)_j$ for a single shared angle $\theta$ optimized jointly
with the circuit parameters. Standalone $ZZ$ is provably degenerate: since
$ZZ$ is diagonal in the computational basis, exactly as $H_C$ is, a circuit
built entirely from diagonal cost and diagonal mixer layers is itself
diagonal, and thus the objective is constant at $\bar C$ for every choice
of parameters, and we exclude it from further comparison. The remaining
standalone mixers ($XX$, $YY$, $AA$) each conserve total Hamming-weight
parity, since every generator flips exactly two bits per application, leading to a more limited topology than the standard X-mixer.

We additionally tested $X+ZZ$ and $X+AA(\theta)$ alongside
$X+XX$ and $X+YY$. $X+ZZ$ was redundant, recovering exactly the plain-$X$ optimum rather than
{improving on it, because its application only reparameterizes the diagonal cost evolution. $X+AA(\theta)$, which interpolates between $X+XX$
($\theta=0$) and $X+YY$ ($\theta=\pi/2$) at the cost of one additional
shared parameter, converged to the best-found $X+YY$ value in every setting we tested rather than producing a larger value.
These observations are consistent with $YY$ being the more effective of the two
single-generator augmentations for the graph families studied here, and we found no numerical evidence that the additional parameter in $AA(\theta)$ improves on $X+YY$.}
\section{The $k\ge2$ Caveat: Generalization vs. Our Contribution}
\label{sec:k-caveat}

We introduced the general $k$-hop framework in Definition~\ref{def:augmentation} and Lemma~\ref{lem:lightcone-growth} to
motivate the analysis and show the structural role of $k=1$ as the smallest nontrivial case.
However, we do \emph{not} present $k\ge2$ results as a contribution of this work. We include this
discussion to honestly delineate the scope of our claims.

\paragraph{$k=1$ : mimicking problem graph topology}

{For $k=1$, the augmentation edge set equals the problem graph's edge set. This choice has two practical advantages:} 

\begin{enumerate}
  \item \textbf{{Connectivity-aligned:}} {The added $XX$/$YY$ interactions use the same problem-graph edges as the $ZZ$ cost terms and therefore require no additional logical connectivity beyond that already needed for the cost Hamiltonian. On hardware whose native connectivity does not realize the problem graph, routing may still be required. For $k\ge2$, the augmentation additionally couples vertices separated by multiple problem-graph hops, increasing the connectivity burden.}
  \item \textbf{{Parameter-efficient:}} {We go from two parameters per layer $(\gamma,\beta_X)$ to three $(\gamma,\beta_X,\beta_{XX}\ \text{or}\ \beta_{YY})$. This is a minimal increase in variational dimension. More elaborate $k\ge2$ constructions can introduce additional interaction classes or parameters and thereby make the classical optimization problem more complex; we therefore restrict the numerical study to $k=1$.}
  
\end{enumerate}

\section{Discussion and Future Work}
\label{sec:discussion}

Our results highlight two distinct structural limitations of finite-depth QAOA mixers. The first is configuration-space transport: if the mixer cannot spread amplitude sufficiently, the output distribution cannot concentrate enough probability on high-cost configurations to reach a desired objective value. The second is finite-depth causal reach: a local objective term can only depend on problem information contained inside its circuit lightcone. These should be interpreted as complementary necessary-capability diagnostics. Failing the relevant condition can preclude a target behavior, whereas satisfying either condition alone does not guarantee a high-quality QAOA solution.

For configuration-space transport, the perturbative analysis shows that the small-$\beta$ and small-$\gamma$ first-order descriptions are exactly equivalent for connected regular mixer graphs. One expression is written in terms of cost differences across mixer edges and the other in terms of mixer eigenmodes, but both reduce to the same graph Dirichlet energy. This provides a direct translation between edge-space and spectral descriptions and clarifies why the coarse spectral gap alone is insufficient: the exact leading-order quantity depends on the full set of spectral displacements weighted by the cost Hamiltonian's spectral overlaps. The truncation analysis then quantifies the regime in which this leading-order dependence controls the exact objective.

{The non-perturbative $\ell_1$ analysis gives a different transport statement. The product}
\[
{\Lambda=\prod_{j=1}^{p}S_j^2}
\]
{provides a phase-blind upper envelope on output probability concentration. The associated linear-program bound turns this into a cost-distribution-dependent upper bound on the objective and, read in reverse, a necessary spreading condition for attaining a target approximation ratio. This result is intentionally one-sided: insufficient spreading can rule out a target, but large spreading does not certify achievability because the triangle-inequality step discards the phase relations between computational paths.}

Part~II isolates the locality bottleneck. Problem-edge-aligned $XX$ and $YY$ interactions increase the causal-radius bound from $p$ to at most $2p$ for $k=1$. On cycles, circular ladders, and high-girth regular bipartite graphs, local symmetry then permits exact finite reductions of the global objective at fixed parameters. Numerically, both augmented mixers improve the best-found approximation ratios over the standard $X$ mixer at equal depth while using only one additional shared variational parameter per layer.

The comparison between $X{+}XX$ and $X{+}YY$ is especially important for interpreting the locality result. The two augmentations have the same interaction support and the same causal-radius bound, yet they attain different best-found approximation ratios. Thus causal reach is a capability---it determines which information can become available to a local observable---but it is not a sufficient performance predictor. The operator algebra and phase relations generated within that accessible region determine whether the additional reach is useful. This observation also prevents the two structural analyses from being misread as a proposed universal mixer score.

Several directions follow directly. First, the non-perturbative $\ell_1$ envelope could be sharpened by retaining phase-coherence information in the path sum. Second, the perturbative equivalence could be extended beyond regular mixer graphs using degree-weighted or normalized Laplacian objects. Third, the equal-reach but unequal-performance behavior of $X{+}XX$ and $X{+}YY$ motivates a low-depth analytic comparison of their commutator and interference terms. Finally, the locality experiments should be extended beyond highly symmetric graph families and compared under globally matched parameter, gate-count, depth, and connectivity budgets. Such extensions would move toward a fuller theory in which transport and reach provide useful structural diagnostics, while phase-sensitive dynamics distinguish among mixers that pass those screens.

\section{Conclusion}
\label{sec:conclusion}

We studied mixer structure in shallow QAOA through two complementary structural bottlenecks. In configuration space, we derived first-order objective-surrogate bounds in small-$\beta$ and small-$\gamma$ regimes and showed that, for connected regular mixers, their leading-order terms are exactly identical through a graph-Laplacian identity. The common quantity is the mixer-graph Dirichlet energy, which provides equivalent edge-space and spectral descriptions of mixer-dependent objective growth. We also bounded the truncation error, yielding corresponding rigorous bounds on the exact box-restricted objective.

Beyond the perturbative regime, we derived a distributional upper envelope based on the layer-wise $\ell_1$ spreading of the mixer unitary. This produces a cost-distribution-dependent objective bound and a necessary spreading condition for reaching a target approximation ratio. Its role is that of a transport speed limit: insufficient spreading can preclude the target, while sufficient spreading does not guarantee it because useful concentration still depends on phase-sensitive interference.

At the physical-circuit level, we studied the second bottleneck: finite-depth causal reach. Problem-graph-aligned two-local mixer augmentations increase the causal-radius bound to at most $(k+1)p$. For $k=1$, exact local reductions on cycles, circular ladders, and high-girth regular bipartite graphs allow the corresponding shallow-QAOA objectives to be evaluated from finite representative neighborhoods. The numerical study shows consistent best-found improvements from both $X{+}XX$ and $X{+}YY$ at equal depth under a one-additional-parameter-per-layer comparison.

Taken together, these results support a multi-faceted view of mixer design rather than a single universal quality metric. A mixer must provide enough transport for the desired probability concentration and, when nonlocal problem information is relevant, enough causal reach for that information to influence the objective. Passing either structural test alone is not sufficient: the unequal performance of equal-reach $X{+}XX$ and $X{+}YY$ circuits demonstrates that phase-sensitive operator structure remains an additional determinant of shallow-QAOA performance.

\bibliographystyle{plainurl}
\bibliography{references}
\end{document}